\tikzset{>=Stealth}
\newcounter{example}
\newcommand{\exampleNum}[1]{\refstepcounter{example}\label{#1}}
\begin{document}

\title{Classification in Equilibrium: Structure of Optimal Decision Rules\footnote{ We thank Bailey Flanigan, Moritz Hardt, Ben Klemens, Juan Carlos Perdomo, Chara Podimata, and Alexander Tolbert for very helpful conversations and feedback.}}

\author{Elizabeth Maggie Penn\thanks{Professor of Political Science, Professor of Data \& Decision Sciences, Emory. \newline Email: \textit{elizabeth.m.penn@gmail.com}.}\; and John W. Patty\thanks{Professor of Political Science, Professor of Data \& Decision Sciences, Emory. Email: \textit{jwpatty@gmail.com}. }}

\date{\today}

\maketitle

 \begin{abstract}
 
 This paper characterizes optimal classification when individuals adjust their behavior in response to the classification rule.  We model the interaction between a designer and a population as a Stackelberg game: the designer selects a classification rule anticipating how individuals will comply, cheat, or abstain in order to obtain a favorable classification. Under standard monotone likelihood ratio assumptions, and for a general set of classification objectives, optimal rules belong to a small and interpretable family---single-threshold and two-cut rules---that encompass both conventional and counterintuitive designs. Our results depart sharply from prior findings that optimal classifiers reward higher signals. In equilibrium, global accuracy can be maximized by rewarding those with lower likelihood ratios or by concentrating rewards or penalties in a middle band to improve informational quality.  We further characterize classification objectives that rule out socially harmful equilibria that disincentivize compliance for some populations.

 \end{abstract}

\normalsize

\section{Introduction}

A central challenge in designing decision rules that classify people is that the rule itself reshapes the behavior it evaluates. Once individuals understand how they are being assessed, they adjust their actions in order to obtain better outcomes, so that classification alters the  distribution of behavior it is meant to predict. This endogeneity lies at the core of a rapidly expanding literature in incentive-aware and strategic machine learning that analyzes how individuals manipulate observable features to influence algorithmic decisions \citep{HardtEtAl16,Dong18,HuEtAl19,MilliEtAl19,Podimata25},  and a complementary literature on performative prediction that examines how deployed models change the data-generating process that they are trained on \citep{PerdomoZrnicMendlerDunnerHardt2020,KimPerdomo23,HardtMendler-Dunner23}.  At the same time, a longstanding tradition in economics and political science examines how evaluative rules shape individual behavior. Signaling and screening theories \citep{Spence73,RothschildStiglitz76,Holmstrom82}, statistical discrimination \citep{Phelps72,Arrow73Discrimination,FangMoro11}, and work on affirmative action and institutional design \citep{McCubbinsSchwartz84,CoateLoury93} all highlight how thresholds, monitoring, and classification criteria can induce strategic responses in a population.

This existing research typically characterizes how individuals respond to a fixed classifier or policy rule---whether through genuine investment, effort, or feature manipulation---but rarely characterizes the designer's optimal rule under general objectives that explicitly anticipate those responses. In contrast, we characterize the globally optimal classifier under general designer payoffs and fully endogenous behavioral responses.  In our setting a designer selects a classification rule, agents choose whether to comply (or, in an extension, to possibly ``cheat''), and the resulting equilibrium prevalence of compliance depends on the incentive properties of the classifier itself. We show that, under a monotone likelihood ratio assumption, we can completely characterize globally optimal classification rules across a wide range of classification objectives and constraints.

Our first main result (Theorem \ref{quotaTheorem}) shows that in a baseline environment,  every optimal rule is a single threshold rule that may be either positive (rewarding scores falling above a ``cut") or negative (rewarding scores \textit{below} the cut). The possibility of an optimal negative threshold rule is surprising: we find that even accuracy-maximizing designers may optimally reward low-likelihood-ratio signals in order to reduce equilibrium compliance and thereby improve global accuracy. This creates the potential for endogenous behavioral disparities to emerge in equilibrium, where one population faces a positive threshold that increases compliance, while another faces a negative threshold that decreases it.

We then identify conditions on the designer's classification objectives under which this pathology cannot occur (Theorem \ref{noNeg}). A set of simple order restrictions on the designer's payoffs rules out negative thresholds entirely, ensuring that optimal classifiers never strictly disincentivize compliance in a population. These conditions draw a sharp contrast between accuracy objectives---which can generate opposing incentive effects even when the same classification objective is applied across groups---and alternative classification objectives such imposing a lower bound on precision.

Finally, we extend the model to environments where individuals may choose an intermediate ``cheating" action that partially mimics compliance at lower cost. Here we prove that optimal rules remain interpretable for a general class of signal environments: every optimal classifier is either a single threshold or a ``two-cut" rule, which rewards only a middle band of scores or its complement (Theorem \ref{twoCuts}). Cheating consequently breaks the monotonicity properties of an optimal classifier. In some cases, a two-cut rule can reduce mimicry and produce a cleaner, more separable empirical distribution of signals.

Taken together, our results show that optimal classification remains low-dimensional and interpretable even when behavior endogenously responds to the rule. However, the optimal design of a classifier can depart sharply from the monotone structures familiar in machine learning and standard mechanism design environments. Our results also illustrate that designer objectives matter a lot: natural adjustments to the designer's payoff function can eliminate incentive pathologies entirely. The remainder of the paper formalizes these results and situates them within the broader ML and social science literatures on strategic behavior, screening, and performative feedback.

\paragraph{Our contribution.} We provide a full characterization of globally optimal classification for general classification objectives, under potential capacity constraints, and when individual behavior responds endogenously to the classification rule.  

\begin{enumerate}
\item \textit{Threshold optimality}: In our baseline environment, and under possible capacity or quota constraints on the allowable fraction of positive classifications, every optimal classification rule is positively or negatively monotone and characterized by a single threshold.
\item \textit{Constraints on  objectives}: We identify simple inequalities on a designer's classification objectives that rule out interior negative threshold optimality and therefore prevent optimal classifiers from strictly incentivizing socially harmful behavior.
\item \textit{Cheating / feature manipulation}: When individuals may partially mimic compliance, optimal classifiers (with or without a quota constraint) remain simple for a broad and empirically relevant class of signal distributions: they are either monotone single-threshold rules or non-monotone ``two-cut" rules.
\item  \textit{Interpretation}: Our results show that optimal classifiers can depart sharply from those studied in incentive-aware ML and classical screening models, taking the form of negative thresholds or two-cut rules that do not arise in those settings. We also link classification objectives to their behavioral consequences for a population, and characterize a class of objectives that can rule out certain incentive pathologies.
\end{enumerate}

\paragraph{Relation to Incentive-Aware Machine Learning and Performative Prediction.}
Our analysis builds on the growing literature on incentive-aware machine learning, where individuals strategically modify their observable features in response to algorithmic decision rules. Recent survey work \citep{Podimata25} organizes this literature around three themes: robustness to gaming, fairness under heterogeneous capacity for adaptation, and the distinction between gaming and genuine improvement. 

Early papers in the robustness tradition typically model classification as a sequential learner-agent interaction in which people best respond to a classifier by strategically manipulating observable features. A central contribution of this literature is to formalize this interaction as a Stackelberg game in which the classifier commits to a rule by anticipating the best responses of those being classified \citep{HardtEtAl16}. Subsequent work extends this Stackelberg framework, studying incentive-aware and fairness-aware classifier design under feature manipulation \citep{Dong18,HuEtAl19,MilliEtAl19,Chen2018Strategyproof}. In all of these models, agents engage in feature manipulation: the agents' choices alter the signals observed and used by the classifier to make its decisions, but not the agent's underlying type, productivity, or---most importantly---the classification of the individual that is ``correct'' from the standpoint of the classifier's objective.

A complementary ``improvement and causality" line of research instead treats strategic responses as genuine behavioral change \citep{KleinbergRaghavan20,MillerEtAl20,Shavit20}. In these papers ``effort" is always productive in principle, and the designer's challenge is to steer behavior in a desirable direction.  This work is more squarely focused on how the classifier shapes agents' incentives, as opposed to the challenge of ``learning'' the data distribution.  In both senses, our framework shares a lot in common with these earlier contributions.  For example, we too set aside learning issues entirely.  However, a distinction between our framework and much of the work in this vein is that our analysis emphasizes the role of the algorithm designer's uncertainty about the data in shaping its optimal algorithm.

A related literature on performative prediction and outcome performativity \citep{PerdomoZrnicMendlerDunnerHardt2020,KimPerdomo23,HardtMendler-Dunner23} studies learning problems where model deployment changes the data distribution. These models formalize both feature and outcome performativity, whereby predictions reshape the data they aim to predict. Equilibrium is defined by \textit{performative stability}, a fixed point where the model is optimal given the distribution it induces. Our framework differs in both mechanism and equilibrium concept. We micro-found individual responses as strategic choices that determine the induced distribution of types in a population, and we characterize the global Stackelberg optimum rather than a local fixed point. Whereas outcome performativity describes unintentional feedback from model deployment, our analysis treats feedback as intentional and optimized. The designer anticipates behavioral responses to classification and chooses its rule accordingly.

Relative to these literatures, our contribution is to endogenize the population distribution itself and to characterize the globally optimal classification rule. Individual actions change both signals (features) and true type prevalence, generating a nonlinear mapping between the classifier's rule and the population it observes. Because behavioral responses enter the designer's problem nonlinearly, our framework can produce equilibria that the improvement-and-causality and performative stability frameworks cannot capture, including cases in which global accuracy is maximized by negative-threshold or two-cut rules. In short, our framework complements these literatures by deriving the global Stackelberg optimum when the informational environment itself evolves endogenously with the decision rule.

\paragraph{Relation to Signaling and Screening Models in the Social Sciences.}

A parallel tradition in economics and political science has long modeled the behavioral feedbacks of policy rules.  In classical models of signaling and screening \citep{Spence73,RothschildStiglitz76,CrawfordSobel82}, the designer anticipates agents' strategic responses, but does so under highly structured objectives such as profit maximization or efficient risk separation. Models of statistical discrimination  \citep{Phelps72,Arrow73Discrimination} formalized the idea that people strategically invest in costly signals to influence classification outcomes. \cite{CoateLoury93} extended this logic to affirmative-action and hiring thresholds, showing that equilibrium investment responses can sustain or eliminate discriminatory beliefs depending on policy design.  Political science work on monitoring, oversight, and institutional design \citep{McCubbinsSchwartz84,GilliganKrehbiel87,LupiaMcCubbins98} similarly emphasizes how rules shape behavior and decisions.  More recently, \cite{JungEtAl20} bring this approach to algorithmic settings, modeling classification as a screening mechanism that anticipates behavioral responses in a Stackelberg manner but with a particularly tractable designer objective of compliance maximization.  

Our analysis belongs to this broad tradition but departs from it in important ways.  First, we allow the designer to optimize over general classification objectives---including accuracy, precision-like criteria, quotas, and compliance maximization---rather than the more narrow payoff structures of classical screening models. Second, while models of statistical discrimination typically analyze Cournot-style equilibria in which rules and behaviors are mutually consistent given fixed beliefs, and Jung et al. study a special Stackelberg case under compliance maximization, we characterize the global Stackelberg optimum for general designer preferences, fully internalizing both behavioral feedback and resulting changes in the informativeness of equilibrium signals. Our broader treatment yields structural forms for optimal classifiers that do not arise in either screening models or statistical-discrimination frameworks.

Our model also has connections to a more recent literature on information design in political economy (\textit{e.g.}, \cite{KamenicaGentzkow11}, \cite{BergemannMorris19}).  Because we are considering a Stackelberg game, the algorithm in our model is mathematically equivalent to a Blackwell experiment designed ``by a third party'' (here, the algorithm designer) for use by the individual when choosing his or her own behavior.  In line with much of the work on information design, some of our results reflect the impact of the algorithm designer's commitment power: in many cases, we can achieve more efficient social outcomes in this setting than could be achieved if the designer did not have this commitment power.  A subtle point of the results we present below is that the \textit{form} of classifier that the designer can find necessary to achieve its objective can be counterintuitive to the point of appearing pathological.

Finally, \cite{Penn25} studies a related model of classification with endogenous behavior and shows that, when individuals change their labels in response to the classifier, an optimal rule is always a (positive or negative) threshold for a restricted class of ``accuracy-aligned" and ``accuracy-misaligned" designer objectives. The present article develops a substantially more general framework. We extend the analysis to all linear designer payoff functions, introduce quota constraints on the share of positive classifications (Theorem \ref{quotaTheorem}), and provide preference conditions that rule out negative-threshold optimality (Theorem \ref{noNeg}). We further enlarge the behavioral environment by allowing a third action---cheating, in which individuals can ``purchase" a misleading signal---and show that this yields two-cut optimal rules (Theorem \ref{twoCuts}).

\paragraph{Structure of the article.}
Section~\ref{Sec:Baseline} introduces the baseline model. Section~\ref{Sec:Threshold} characterizes the equilibrium incentives induced by any binary classifier and derives the globally optimal classification strategy for a designer, both with and without a quota constraint on positive classifications. Section~\ref{justPositive}  provides preference conditions for the designer under which interior negative-threshold rules are never optimal, ensuring that optimal classification cannot disincentivize compliance. Section~\ref{Sec:Cheat} extends the model to environments with an intermediate, ``cheating" action and shows that---in a wide class of signal environments---the optimal classifier is either a one-cut or a two-cut rule. In Section~\ref{Sec:Examples}, we present several illustrative examples. Section~\ref{Sec:Conclusion} concludes.

\section{Baseline Model \label{Sec:Baseline}}

Consider two players: an individual $i$, and an algorithm, $D$. The individual must choose a binary \textit{behavior}, $\beta_i\in\{0, 1\}$. We refer to $\beta_i=1$ as \textit{compliance} and $\beta_i=0$ as \textit{noncompliance}. If choosing $\beta_i=1$ the individual pays cost $c_i$.\footnote{If $c_i<0$, then it represents a benefit, but we refer to $c_i$ as a cost throughout.} The cost, $c_i$, is private information to the individual.  It is common knowledge, however, that $c_i$ is distributed according to a probability measure with a smooth cumulative distribution function (CDF), $F:\mathbf{R}\rightarrow (0, 1)$, with full support on $\mathbf{R}$ and corresponding probability density function (PDF), $f: \mathbf{R} \to \mathbf{R}_{++}$.

After $i$ chooses behavior $\beta_i \in \{0,1\}$, the algorithm, $D$, observes a \textit{signal}, $s_i \in\mathbf{R}$.  While $\beta_i$ is observed by $i$, but not by $D$, it is common knowledge that $s_i$ is generated according to a behavior-conditional probability distribution with smooth CDF, $G_{\beta_i}: \mathbf{R}\to (0,1)$, with full support on $\mathbf{R}$ and corresponding PDF, $g_{\beta_i}: \mathbf{R} \to \mathbf{R}_{++}$.  Additionally, we assume that $g_1$ and $g_0$ satisfy the strict monotone likelihood ratio property (MLRP) with respect to $s_i$:\footnote{Note that the assumptions that $F$ is continuous and that  $g_1$ and $g_0$ are continuous with full support are stronger than necessary, but simplify the analysis by allowing us to disregard boundary cases.}
\[
\frac{g_1(s)}{g_0(s)} \text{ is strictly increasing in } s. 
\]
Finally, we conserve space by writing $G\equiv (G_0,G_1)$ to denote the pair of behavior-dependent distributions of the signal, $s_i$, and refer to $G$ simply as the \textit{signal distribution}.

After observing $s_i$, the algorithm makes a binary decision for $i$, denoted by $d_i\in\{0, 1\}$, where we refer to $d_i=1$ as $D$ \textit{rewarding} $i$, and $d_i=0$ as $D$ \textit{punishing} $i$. $D$'s algorithm is denoted by 
\[
\delta: \mathbf{R} \to [0,1],
\]
where $\delta(s) \equiv \Pr[d_i=1\mid s]$ is the probability $i$ is rewarded $(d_i=1)$, given signal $s \in \mathbf{R}$.  Throughout, we require that $\delta(s)$ be Lebesgue-integrable and denote the set of all Lebesgue-integrable functions from $\mathbf{R}\to [0,1]$ by $\mathcal{D}$.
 
\paragraph{Timing of Decisions.}  We consider a \textit{Stackelberg game} in which $D$ publicly commits to $\delta$ prior to $i$'s choice of behavior, $\beta_i$. Specifically, the timing of decisions we consider is as follows:
\begin{enumerate}
\item Individual $i$ privately observes behavioral cost $c_i$, drawn according to $F$.
\item With knowledge of $F$ and $G$, $D$ commits to a classification algorithm $\delta$.
\item With knowledge of $c_i$, $F$, $G$, and $\delta$, $i$ chooses behavior $\beta_i\in \{0,1\}$.
\item Based on $i$'s behavior, $\beta_i$, $i$'s signal, $s_i$, is generated according to $G_{\beta_i}$.
\item Based on $s_i$, $i$ is rewarded or punished according to $D$'s announced algorithm, $\delta$.
\item The process concludes, with $i$ and $D$ receiving their payoffs, $u_i(\beta_i,d_i)$ \& $u_D(\beta_i,d_i)$.
\end{enumerate}
With the timing in hand, we now turn to $i$'s behavioral incentives induced by any algorithm $\delta$.

\subsection{Individual Incentives \& Behavior}

If $i$ is rewarded ($d_i=1$), then $i$ receives a reward, $r_1\in \mathbf{R}$. If $i$ is punished ($d_i=0$), $i$ pays a penalty, $r_0\in \mathbf{R}$.  Accordingly, $i$'s net benefit from being rewarded is denoted by $r \equiv r_1-r_0$.  We normalize the model by assuming throughout that $r>0$.

\paragraph{Individual Payoffs.} We assume that individual $i$'s \textit{payoff function}, $u_i$, is linear in $\beta_i$ and $d_i$:\footnote{We will generally omit $c_i$, $r_1$, and $r_0$ and write $i$'s payoff function simply as $u_i(\beta_i,d_i)$.}
\begin{equation}
    \label{Eq:IndividualPayoff}
    u_i(\beta_i,d_i\mid r_1, r_0, c_i) = r_1 \cdot d_i + r_0 \cdot (1-d_i) - c_i \cdot \beta_i.
\end{equation}
Standard arguments imply that $i$'s optimal choice of behavior, given $c_i$, $\delta$, $G_0$, and $G_1$, is 
\begin{equation}
    \label{Eq:OptimalBehaviorRaw}
\beta_i^*(c_i\mid r,\delta,G) = \begin{cases}
    1 & \text{ if } c_i \leq r \cdot \big(\Pr[d_i=1\mid \beta_i=1, \delta, G] - \Pr[d_i=1\mid \beta_i=0, \delta, G]\,\big),\\
    0 & \text{ if } c_i > r \cdot \big(\Pr[d_i=1\mid \beta_i=1, \delta, G] - \Pr[d_i=1\mid \beta_i=0, \delta, G]\,\big).
\end{cases} 
\end{equation}
For any admissible signal structure, $G$, denote $\delta$'s true positive rate (TPR), false positive rate (FPR), true negative rate (TNR), and false negative rate (FNR) as follows:
\begin{eqnarray*}
TPR(\delta\mid G) & \equiv & \Pr[d_i=1\mid \beta_i=1, \delta, G],\\
FPR(\delta\mid G) & \equiv & \Pr[d_i=1\mid \beta_i=0, \delta, G],\\
TNR(\delta\mid G) & \equiv & \Pr[d_i=0\mid \beta_i=0, \delta, G],\\
FNR(\delta\mid G) & \equiv & \Pr[d_i=0\mid \beta_i=1, \delta, G].
\end{eqnarray*}
With this notation, we can re-express Equation \ref{Eq:OptimalBehaviorRaw} as follows:
\begin{equation}
    \label{behavior}
\beta_i^*(c_i\mid r,\delta,G) = \begin{cases}
    1 & \text{ if } c_i \leq r \cdot \big(TPR(\delta\mid G)-FPR(\delta\mid G)\big),\\
    0 & \text{ if } c_i > r \cdot \big(TPR(\delta\mid G)-FPR(\delta\mid G)\big).
\end{cases} 
\end{equation}

\paragraph{Individual $i$'s Incentives under $\delta$.} We describe the difference in $\delta$'s true positive and false positive rates under $G$ as individual $i$'s \textit{incentives} under classifier $\delta$, and denote it by:
\begin{equation}
\label{incentivesDelta} 
\Delta_\delta(G) \equiv TPR(\delta\mid G)-FPR(\delta\mid G),
\end{equation}  
but will omit the conditioning of $\Delta_\delta$ on $G$ when no confusion can result.  Then, in line with the literature on algorithmic fairness, we refer to the probability that individual $i$ chooses $\beta_i$ under $\delta$, given the signal distribution $G$, as individual $i$'s \textit{prevalence}, which our assumptions imply is equal to the following:
\begin{equation}
\pi^*(\delta) \equiv \Pr_F\left[c_i: \beta_i^*(c_i\mid r, \delta, G)=1\right] = F(r\cdot \Delta_\delta).
\end{equation} 
Thus, individual $i$'s prevalence under $\delta$ is the \textit{ex ante} probability that $i$ chooses behavior $\beta_i=1$ when $i$ knows that $i$'s choice of $\beta_i$ will be classified by $\delta$.

\paragraph{The Designer's Payoff.}  We assume $D$ has linear payoffs over $(\beta_i,d_i)$, as shown in Table \ref{Tab:DPayoff}. In words, $A_1, A_0, B_1, B_0$ represent $D$'s marginal value for the frequency of each of the four cells of the confusion matrix in Table \ref{Tab:DPayoff}.  

\vspace{.1in}
\small
\begin{table}[h]
    \centering
    \caption{The Designer's Payoffs \label{Tab:DPayoff}}
    \begin{tabular}{|c||c|c|} \hline
&\multicolumn{2}{c|}{Decision} \\ \hline
Behavior&$d_i=1$&$d_i=0$ \\ \hline \hline

\multirow{2}{*}{$\beta_i=1$}&$A_1$&$A_0$ \\ 
& (True Positive)&(False Negative) \\ \hline
\multirow{2}{*}{$\beta_i=0$}&$B_0$&$B_1$\\ 
& (False Positive) & (True Negative) \\ \hline
 \end{tabular}
\vspace{.5in}
\end{table}
\normalsize

\begin{table}[H]
\centering
\caption{Some Examples of Designer Objectives\label{objectives}}
 \begin{tabular}{|c||c|c|} \multicolumn{3}{c}{Accuracy} \\ \hline
&$d_i=1$&$d_i=0$ \\ \hline \hline

$\beta_i=1$&$\cellcolor{gray!30}A_1=1$ &$A_0=0$ \\ \hline
$\beta_i=0$&$B_0=0$&\cellcolor{gray!30}$B_1=1$ \\ \hline
 \end{tabular} \,
 \begin{tabular}{|c||c|c|}  \multicolumn{3}{c}{Compliance} \\ \hline
&$d_i=1$&$d_i=0$ \\ \hline \hline

$\beta_i=1$&\cellcolor{gray!30} $A_1=1$ &\cellcolor{gray!30}$A_0=1$ \\ \hline
$\beta_i=0$&$B_0=0$&$B_1=0$ \\ \hline
 \end{tabular}
 
 \vspace{.1in}
 
  \begin{tabular}{|c||c|c|} \multicolumn{3}{c}{$p$-Precision} \\ \hline
&$d_i=1$&$d_i=0$ \\ \hline \hline

$\beta_i=1$&$\cellcolor{gray!40}A_1=1$ &$\cellcolor{gray!10}A_0=p$ \\ \hline
$\beta_i=0$&$B_0=0$&\cellcolor{gray!10}$B_1=p$ \\ \hline
 \end{tabular} \,
 \begin{tabular}{|c||c|c|}  \multicolumn{3}{c}{Predatory} \\ \hline
&$d_i=1$&$d_i=0$ \\ \hline \hline

$\beta_i=1$& $A_1=0$ &$A_0=0$ \\ \hline
$\beta_i=0$&$B_0=0$&\cellcolor{gray!30}$B_1=1$ \\ \hline
 \end{tabular}
\end{table}

Table \ref{objectives} illustrates a few of the different objectives the designer of a classifier could seek to optimize.  An accuracy-maximizing designer seeks to maximize the probability $i$ is classified correctly, but has no intrinsic preference for one type of behavior over another. A compliance-maximizing designer maximizes the probability $i$ chooses $\beta_i=1$. A $p-$precision  designer (for $p\in(0, 1)$) seeks to assign the positive classification outcome to $i$ if and only if the probability $i$ has complied is greater than $p$. A predatory designer wants to maximize the probability that $i$ does not comply and is assigned the negative classification outcome.  More formally, an algorithm $\delta$ is \textit{optimal} for $D$ if $\delta$ maximizes the following: 
\begin{equation}
\label{dUtility}
u_D(\delta)=\pi^*(\delta) \cdot \big(TPR(\delta) \cdot A_1 + FNR(\delta)\cdot A_0\big)+(1-\pi^*(\delta))\big(FPR(\delta) \cdot B_0+ TNR(\delta)\cdot B_1\big).
\end{equation} 
Thus, $D$'s optimal choice of $\delta$ depends both on its effect on prevalence, $\pi^*(\delta)$, and the conditional distribution of $d_i$ given $\beta_i$ (\textit{i.e.}, the \textit{accuracy} of $\delta$'s assignment, $d_i$, with respect to $i$'s behavior, $\beta_i$).  With optimality defined, we now turn to characterizing optimal classification algorithms.

\subsection{Threshold \& Two-Cut Classifiers}

Our results below allow provide sufficient conditions for the designer's optimal algorithm to be either a \textit{threshold rule} or a \textit{two-cut rule}.  We now formally define these families of rules.

\paragraph{Threshold Rules.} Threshold rules are commonly used in practice, and frequently optimal in well-behaved settings.  While all threshold rules are monotonic, this monotonicity can be increasing or decreasing in the score $s$.  The following formally defines both families of threshold rules. 
\begin{definition}
Algorithm $\delta$ is a \textbf{threshold rule} if, for some $\tau \in \mathbf{R}\cup\{\pm\infty\}$, $\delta$ is of one of the following forms:
\begin{eqnarray*}
\text{\emph{Positive $\tau$-Threshold Rule:}} \;\;\;\;\;\; \delta_\tau^+(s) & = & \begin{cases}
    0 & \text{ if } s < \tau, \\
    1 & \text{ if } s \geq \tau,
\end{cases} \\
\text{\emph{Negative $\tau$-Threshold Rule:}} \;\;\;\;\;\; \delta_\tau^-(s) & = & \begin{cases}
    1 & \text{ if } s \leq \tau, \\
    0 & \text{ if } s > \tau,
\end{cases}
\end{eqnarray*}
\end{definition}
for some $\tau \in \mathbf{R}\cup\{\pm\infty\}$.  In words, the positive threshold rule $\delta_\tau^+$ rewards individual $i$ if and only if $i$'s score, $s_i$, is no \textit{less} than $\tau$.  Similarly, the negative threshold rule $\delta_\tau^-$ rewards  individual $i$ if and only if $i$'s score, $s_i$, is no \textit{greater} than $\tau$. Letting $\tau=\infty$ corresponds to assigning $\delta_\infty^+(s)=0$ and $\delta_\infty^-(s)=1$ for all $s$. 

\paragraph{Two-Cut Rules.} In addition to standard threshold rules, we also consider classifiers that assign positive or negative outcomes on a band of scores. We refer to these as ``two-cut rules." As with threshold rules, they come in two canonical forms, defined below.
\begin{definition}\label{twoCutDef}
Algorithm $\delta$ is a \textbf{two-cut rule} if, for some $\tau_L<\tau_H$, $\delta$ is of one of the following forms:
\begin{eqnarray*}
\text{\emph{Inner $(\tau_L,\tau_R)$-Two-Cut Rule:}} \;\;\;\;\;\; \delta^I(s) & = & \begin{cases}
1 & \text{ if }\,\,\, s\in [\tau_L, \tau_H],\\
0 & \text{ otherwise,}
\end{cases} \;\;\;\;\; \text{ or}\\
\text{\emph{Outer $(\tau_L,\tau_R)$-Two-Cut Rule:}} \;\;\;\;\;\; \delta^O(s) & = & \begin{cases}
0 & \text{ if }\,\,\, s\in (\tau_L, \tau_H),\\
1 & \text{ otherwise,}
\end{cases}
\end{eqnarray*}
\end{definition}

\begin{figure}[h!]
\captionsetup{justification=centering}
    \centering
    \framebox{\includegraphics[width=.66\linewidth]{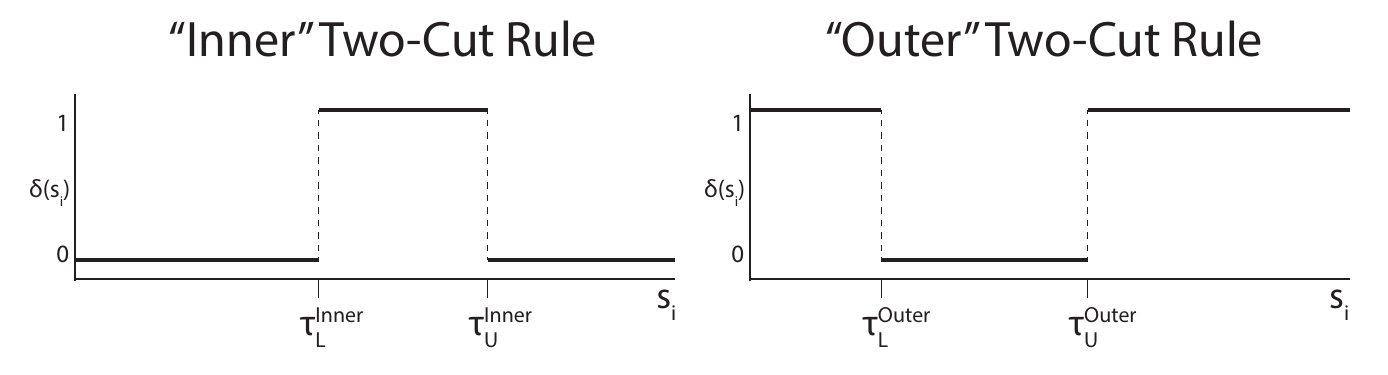}}
    \caption{Examples of Inner and Outer Two-Cut Rules}
    \label{Fig:TwoCutRules}
\end{figure}
\mbox{}\\

As displayed in Figure \ref{Fig:TwoCutRules}, an inner two-cut rule is a ``Goldilocks'' rule (\textit{i.e.}, to be rewarded under $\delta^I$, $s_i$ must be neither too large ($s_i \leq \tau_R$) nor too small ($s_i\geq \tau_L$).  Conversely, an outer two-cut rule is an anti-Goldilocks rule: to be rewarded under $\delta^O$, $s_i$ must be sufficiently small ($s_i\leq \tau_L$) or sufficiently large ($s_i \geq \tau_R$).  

\section{Quotas and Threshold Structure \label{Sec:Threshold}}

Throughout, we investigate environments where the classifier is potentially constrained by a quota on the fraction of individuals who may be rewarded. This allows us to capture settings such as admissions and hiring, where capacity constraints shape classification design.  Specifically, suppose there is a requirement that the expected proportion of individuals who are rewarded ($d_i=1$) by $\delta$ be no greater than $q \in (0,1]$, where $q$ is exogenous and fixed.  Setting $q=1$ is equivalent to $D$ being unconstrained in the share of expected positive classifications that are allowable, and so this environment allows us to capture both constrained and unconstrained settings.

For any admissible $(F,G)$ and $\delta\in \mathcal{D}$, let $v(TPR(\delta), FPR(\delta))$ denote the fraction of individuals rewarded in expectation by $\delta$, defined as follows: 
\begin{equation}
\label{quotaEq}
v(TPR(\delta), FPR(\delta))=\pi^*(\delta) \cdot TPR(\delta) + (1-\pi^*(\delta)) \cdot FPR(\delta).
\end{equation} 
Then, the quota constraint requires that: 
\[
v(TPR(\delta), FPR(\delta)) \leq q.
\]
We now show that the optimal classifier must be a positive or negative threshold rule (Theorem \ref{quotaTheorem}).  Our proof strategy is to discretize our signal space ($\mathbf{R}$) by partitioning it into a finite set of intervals (or ``bins'').  Within this coarsened signal space, the optimal classifier in this discrete problem will be a finite vector that assigns each bin  a probability that $d_i=1$.  We first show that the optimal classifier for this finite binned problem is a threshold- or negative-threshold rule, with at most one bin involving non-degenerate randomization. We then prove that the optimal binned classifier converges to either a positive or negative threshold rule in the continuum.

\begin{theorem}
\label{quotaTheorem} 
Suppose $\delta$ can reward no more than $q\in(0, 1]$ of the individuals. The optimal classifier, $\delta_q^*$, is a threshold or negative threshold rule.
\end{theorem}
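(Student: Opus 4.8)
The plan is to follow the discretization strategy outlined in the text. First I would fix a finite binning $\mathcal{B}_n = \{B^1, \ldots, B^n\}$ of $\mathbf{R}$ into intervals ordered so that the likelihood ratio $g_1/g_0$ is increasing across bins, and consider the discretized designer problem: choose $\tilde\delta_n \in [0,1]^n$ to maximize the (binned analogue of the) payoff in \eqref{dUtility} subject to the binned quota constraint $v \le q$. The key structural claim is that at an optimum of this finite problem there is \emph{at most one} bin $k$ with $\tilde\delta_n^k \in (0,1)$, and that the rewarded bins form either an up-set (positive threshold) or a down-set (negative threshold) in the likelihood-ratio order. To see this, I would argue by a local exchange/perturbation argument: suppose two bins $j < k$ both have interior values, or suppose the reward set is not ``monotone'' in the likelihood-ratio order. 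Shifting a small mass $\varepsilon$ of reward probability from one bin to another changes $TPR$, $FPR$, $\pi^* = F(r\Delta_\delta)$, and the value $v$; because $\pi^*$ depends on $\delta$ only through the scalar $\Delta_\delta = TPR - FPR$, the whole objective is, along any such perturbation, a smooth function of the induced pair $(TPR, FPR)$ — indeed of the triple $(TPR, FPR, v)$ — and the binned quota is linear in that pair. The crucial observation is that reallocating reward probability toward bins with higher likelihood ratio raises $TPR$ faster than $FPR$ (moves $\Delta_\delta$ up), while reallocating toward lower-likelihood-ratio bins moves it down; holding $v$ fixed one can always do such a reallocation using only the two interior bins, and this perturbation is first-order feasible (it stays within the quota and the box constraints) with a nonzero first-order effect on the objective unless the objective happens to be flat. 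Thus a strict optimum cannot have two interior bins, and one then checks that, given a single possible mixing bin, the reward set must be an interval of bins at one end of the order — otherwise a mass-preserving swap between a rewarded low bin and an unrewarded higher bin (or vice versa) strictly improves $\Delta_\delta$ in a direction that helps the objective, again contradicting optimality. This yields that $\tilde\delta_n$ is a positive- or negative-threshold rule in the discrete problem.

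Second, I would pass to the continuum limit. Take a sequence of binnings with mesh going to zero (say dyadic refinements) and let $\tilde\delta_n$ be the corresponding optimal binned classifiers, each a positive or negative threshold (with at most one mixing bin). Passing to a subsequence, the threshold locations converge in $\overline{\mathbf{R}}$ and the rules converge (a.e., and in the weak sense relevant for the integrals defining $TPR$, $FPR$, $v$, using continuity and full support of $g_0, g_1, F$) to some limiting threshold rule $\delta^\star$ — positive or negative depending on which type appears infinitely often. Continuity of $G_0, G_1, F$ guarantees that $TPR, FPR, \pi^*, v$ and hence $u_D$ are continuous along this convergence, and that the limiting rule still satisfies $v(\delta^\star) \le q$. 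Finally, one shows $\delta^\star$ is in fact optimal in $\mathcal{D}$: for any feasible $\delta \in \mathcal{D}$ one can approximate it by binned rules whose binned payoff converges to $u_D(\delta)$ and whose constraint violation vanishes, so $u_D(\tilde\delta_n) \ge u_D(\delta) - o(1)$; taking limits gives $u_D(\delta^\star) \ge u_D(\delta)$. Hence the optimal classifier is a threshold or negative-threshold rule.

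The main obstacle I anticipate is the first step — proving the ``at most one mixing bin, monotone reward set'' property in the discrete problem. The subtlety is that the designer's objective is \emph{not} monotone or even quasiconcave in $(TPR, FPR)$ in general: because $\pi^*(\delta) = F(r\Delta_\delta)$ multiplies the accuracy terms, moving $\Delta_\delta$ can help or hurt, and the optimum may genuinely want a ``negative'' threshold (rewarding low signals to depress compliance when that improves informational quality) or may want to sit at an interior $\Delta_\delta$. So the exchange argument cannot simply say ``more $\Delta_\delta$ is better''; it must be organized around the fact that, for \emph{fixed} $v$ (and fixed total reward mass on bins), the map from reward allocations to $\Delta_\delta$ is essentially one-dimensional and \emph{extreme} allocations — all-or-nothing, sorted by likelihood ratio — are the only ones that are not Pareto-dominated in the $(\Delta_\delta, \text{feasibility})$ trade-off. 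Concretely, the argument is: (i) any non-threshold allocation with the same $v$ can be rewritten as a threshold allocation with a (weakly) larger \emph{or} smaller $\Delta_\delta$, i.e. the set of achievable $\Delta_\delta$ at a given feasible $v$ is an interval whose endpoints are attained by threshold rules; (ii) along the boundary of the feasible region the objective is a continuous function of the single variable $\Delta_\delta$, so it attains its max at a point achievable by a threshold rule. Handling the quota constraint carefully — in particular ruling out the degenerate case where the constraint binds in a way that forces two mixing bins — is where the bulk of the care goes, and is presumably why the authors first solve the finite problem explicitly before taking limits. I would also need to confirm that the limiting threshold is not pushed to $\pm\infty$ in a way that violates feasibility; full support of $F, G_0, G_1$ makes $v$ and $u_D$ well-behaved at the extremes, so this should be routine.
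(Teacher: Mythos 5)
Your overall architecture matches the paper's exactly — discretize, establish threshold structure in the finite problem, pass to the continuum by weak-$\ast$ compactness — so Steps 1 and 3 are in the right spirit. The gap is in Step 2, and it is exactly where you flag the bulk of the care going: the exchange argument as you have sketched it does not establish the at-most-one-mixing-bin property. The problem is your step (ii). At $v = q$, fixing $\Delta_\delta$ together with $v=q$ pins down $(TPR,FPR)$ uniquely (from $q = FPR + F(r\Delta_\delta)\,\Delta_\delta$), so the objective is indeed a function of $\Delta_\delta$ alone on that boundary. But the $\Delta_\delta$-values achievable at $v=q$ form an interval, and only the two \emph{endpoints} of that interval correspond to points of $\partial\mathcal{P}$, i.e.\ to threshold rules; the interior $\Delta_\delta$-values require non-threshold allocations. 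Since the objective is not quasiconvex in $\Delta_\delta$ — by your own observation — nothing you have said forces the maximum to occur at an endpoint. Concluding that it ``attains its max at a point achievable by a threshold rule'' is precisely the assertion the theorem needs, not a consequence of continuity along the constraint boundary. Similarly, the earlier perturbation paragraph invokes a bang-bang intuition that would be airtight for a linear objective but is not for this nonlinear one: a perturbation across two interior bins can have zero first-order effect at a genuine interior maximum (strictly concave direction), and you have no mechanism to rule that out.

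What the paper does instead is take the KKT conditions of the binned Lagrangian literally. The partial derivative with respect to $\tilde\delta^j_n$ has the form $W^j - \mu^j + \nu^j$ with $W^j = \gamma_{1,j} X - \gamma_{0,j} Y$, where $X$ and $Y$ collect prevalence, its derivative, the payoff coefficients, and the quota multiplier $\lambda$, but — crucially — do not depend on the bin index $j$. Dividing by $\gamma_{0,j}>0$ gives $W^j/\gamma_{0,j} = (\gamma_{1,j}/\gamma_{0,j})\,X - Y$, and strict MLRP makes $\gamma_{1,j}/\gamma_{0,j}$ strictly increasing in $j$, so $W^j$ can change sign at most once across the ordered bins. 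Complementary slackness then forces $\tilde\delta^j_n = 1$ where $W^j>0$, $\tilde\delta^j_n = 0$ where $W^j<0$, and mixing on at most the single bin where $W^j=0$. That bin-invariance of $X,Y$ combined with MLRP is the structural fact your sketch is missing, and it is what lets the paper close the argument without needing any claim about where along $\Delta_\delta$ the maximum sits. To repair your proof you would need to reproduce this — either by writing down the Lagrangian as the paper does, or by showing that any local optimum with two interior bins or a non-monotone reward set fails the necessary conditions under MLRP, which amounts to the same calculation.
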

\begin{proof}
Proofs of all numbered results are contained in Appendix \ref{appProofs}.
\end{proof}
\noindent\textit{Why Theorem \ref{quotaTheorem} matters.}  Whether or not a quota constraint is present, the optimal classifier---despite behavioral responses to the rule---is still both intuitive and describable by a single number: it is always a (positive or negative) threshold rule. This result shows that allowing behavior to respond strategically to classification does not require complex, non-monotone, or randomizing mechanisms. The surprising feature is that a negative threshold can be accuracy-maximizing, reversing the usual association between higher signals and favorable outcomes. We provide an example of this phenomenon in Section \ref{Sec:Examples}.\footnote{See also \cite{Penn25,PennPatty25AlgEndog}.}  From a machine learning perspective, this illustrates that globally optimal rules can violate familiar monotonicity and calibration intuitions. From a game theoretic perspective, it provides a complete characterization of Stackelberg-optimal screening when the population's behavior and the distribution of observed signals are jointly determined by the rule.

\subsection{ \label{justPositive} Designer Preferences That Rule Out Negative Thresholds}

Theorem \ref{quotaTheorem} highlights a potential pathology of optimal classifiers: it can be optimal for a designer to \textit{reduce} compliance in order to maximize its objective. While this is not surprising when the designer explicitly prefers low compliance, the pathology extends even to seemingly ``neutral" objectives such as accuracy maximization. In such cases, an accuracy-maximizing designer may amplify behavioral disparities across groups, incentivizing strictly greater compliance in one population (via a positive threshold rule) while strictly disincentivizing compliance in another (via a negative threshold rule).

Related observations appear in prior work on strategic manipulation.
\cite{HuEtAl19}  show that accuracy-maximizing classifiers can generate opposite incentive effects across groups when  manipulation costs differ, and  \cite{MilliEtAl19} show that accuracy-oriented design can distort incentives in socially costly ways. The similarities between our results and those of \cite{HuEtAl19} and \cite{MilliEtAl19} are particularly interesting because the reason they arise in our setting is logically independent from why they emerge in their frameworks.  Specifically, our agents' behavioral responses do not simply ``complicate'' the algorithm designer's optimization problem: these responses actually shape the algorithm designer's first order incentives.  In \cite{HuEtAl19} and \cite{MilliEtAl19}, each individual has a correct ``label'' (\textit{i.e.}, a correct classification) that is exogenous and unaffected by the individual's behavioral response to the algorithm.  That is, individuals do not change their underlying type. Strategic behavior consists of ``gaming the system" by manipulating observable features. In our setting, group-level incentive reversals similarly require heterogeneity in either cost or signal distributions, but the underlying pathology is potentially more severe. Optimal classifiers may induce real behavioral change, not just feature manipulation, and may do so in a direction that reduces socially desirable behavior.

The following result, Theorem \ref{noNeg}, provides a set of simple order restrictions on designer preferences $(A_1, A_0, B_1, B_0)$ under which an interior negative threshold rule---the class of rules strictly disincentivizing compliance---is never optimal for the designer. In this sense, the theorem identifies preference structures that ensure optimal classification remains compliance-preserving. We then turn immediately to Proposition \ref{tightProp}, which shows that these conditions are tight.

\begin{theorem}\label{noNeg} Let $[A_1\geq B_0]$, and $[A_0\geq B_1]$ and $[A_1\geq A_0$ or $B_1\geq B_0]$. Then an interior negative threshold rule is never optimal for the designer.
\end{theorem}

\noindent\textit{Intuition for Theorem \ref{noNeg}.} The inequalities in Theorem \ref{noNeg} have a natural interpretation. The first pair, $$A_1\geq B_0\,\,\,\,\,\text{ and }\,\,\,\,A_0\geq B_1,$$ express \textit{compliance monotonicity}: conditional on any decision, the designer weakly prefers that an individual be compliant rather than noncompliant. The second requirement,
$$A_1\geq A_0\,\,\,\,\,\text{ and }\,\,\,\,\,B_1\geq B_0,$$ is an \textit{accuracy-alignment} condition that prevents the designer from strictly preferring misclassification in both behavioral states. Together, these preference restrictions ensure that the designer never finds it optimal to strictly depress compliance through a negative threshold.

As promised above, Proposition \ref{tightProp} shows that none of the inequalities in Theorem \ref{noNeg} can be weakened within the class of simple order restrictions on $A_1, A_0, B_1, B_0$ without admitting an environment in which an interior negative threshold rule is optimal. We prove this result by numerically constructing three environments (a choice of $A_1, A_0, B_1, B_0$, signal distributions $G_1$ and $G_0$, net reward $r$, and cost distribution $H$) in which only one inequality is violated, and in which an interior negative threshold rule is strictly optimal for the designer.

\begin{proposition}\label{tightProp} If any of the inequalities in Theorem \ref{noNeg} fails, then there exists an environment in which an interior negative threshold is optimal.
\end{proposition}

\noindent\textit{Why Theorem \ref{noNeg} and Proposition \ref{tightProp} matter.}   These results provide a clean link between a designer's classification objective and the direction of behavioral change induced in equilibrium.  A useful benchmark to consider is the $p$-precision designer objective defined in Table \ref{objectives}. A $p-$precision designer seeks to issue a positive classification only when it is at least $p$ likely to be correct, effectively imposing a uniform lower bound on the positive predictive value (PPV) of the classifier.  Although accuracy and $p$-precision may appear similar in spirit, only the latter satisfies the conditions of Theorem \ref{noNeg}; $p$-precision designers will never optimally induce opposing incentive effects across groups or disincentivize compliance. This highlights the importance of aligning the designer's objective with its equilibrium consequences. Modest adjustments to a designer's objective can eliminate the negative-threshold pathology entirely.  From a machine learning perspective, this identifies a simple modification of classification objectives that rules out equilibrium incentive reversals and ensures monotone behavioral responses. From a game theoretic perspective, we've characterized when Stackelberg-optimal screening mechanisms must be compliance-promoting, linking a designer's payoff weights to the qualitative direction of induced behavior.

\section{Intermediate Behaviors and Two--Cut Optimality: \\Introducing ``Cheating'' \label{Sec:Cheat}}

We now allow individuals to choose an intermediate action, which we refer to as \textit{cheating}: a lower-cost behavior that partially mimics compliance by improving an individual's signal without altering their true type. This extension captures environments in which individuals can strategically manipulate features or partially invest in ways that increase their likelihood of classification without satisfying the underlying criterion of ``compliance."

Formally, we extend the model of Section~\ref{Sec:Threshold} by giving each individual a third behavioral option. Cheaters are treated as noncompliant in the designer's payoff function; the designer does not directly value cheating relative to noncompliance. However, cheating matters indirectly through at least two channels:
\begin{itemize}
\item \textit{``Direct Effect'' of Cheating.} Because cheating is cheaper than full compliance, some individuals who would have complied in the baseline model now prefer to cheat.
\item \textit{``Indirect Effect'' of Cheating.} Cheating ``muddies'' the score distribution of noncompliant types, reducing the classifier's ability to distinguish compliers from noncompliers and cheaters.
\end{itemize}

Both channels change the designer's incentives. For designers who wish to maximize compliance, the presence of cheating may induce the use of a more restrictive positive threshold to discourage individuals from switching away from full compliance. For accuracy-oriented designers, the second channel can make it optimal to adopt a \emph{two-cut} rule that rewards or punishes only an intermediate band of signals. As with threshold rules, optimal two-cut rules can induce or deter compliance, and examples of both types of rules are presented in Section~\ref{Sec:Examples}. Importantly, two-cut rules that deter cheating induce individuals to self-sort into compliant or noncompliant behaviors. This self-selection restores separation in the signal distribution and improves the informational environment faced by the classifier. This mechanism is illustrated concretely in Example \ref{twoCutExOuter}.

\paragraph{Primitives.}  The individual's behavioral choices are now denoted by $\mathcal{B}\equiv\{0,1,\chi\}$, where $\beta_i=0$ represents $i$ \textit{not complying} and $\beta_i=1$ represents $i$ \textit{complying}, as before. If $i$ chooses $\beta_i \in \{0,1\}$, the remainder of the process is identical to that in the baseline model.  If $i$ chooses to cheat (\textit{i.e.}, $\beta_i=\chi$), then the signal associated with $i$, $s_i$, is generated by a smooth probability distribution with full support on $\mathbf{R}$, and we denote the CDF \& PDF of this distribution by $G_\chi: \mathbf{R}\to (0,1)$ and $g_\chi: \mathbf{R}\to \mathbf{R}_{++}$, respectively.

As in the baseline model, $i$ privately observes his or her cost of compliance, $c_i$, prior to choosing behavior (which in this extension is now $\beta_i \in \{0,1,\chi\}$).  Individual $i$'s payoff function is extended to this environment as follows:
\[
u_i(\beta_i,d_i\mid c_i, r, \kappa) = \begin{cases}
    r_1 \cdot d_i +r_0 \cdot (1-d_i)& \text{ if } \beta_i=0,\\
     r_1 \cdot d_i +r_0 \cdot (1-d_i)- c_i & \text{ if } \beta_i=1,\\
     r_1 \cdot d_i +r_0 \cdot (1-d_i) - \kappa \cdot c_i & \text{ if } \beta_i = \chi,
\end{cases}
\]
where $\kappa \in [0,1]$ is an exogenous and commonly known parameter representing ``how expensive cheating is.''\footnote{We assume that cheating is differentially costly for individuals and positively correlated with their true cost of compliance, $\mathrm{c_i}$. If individuals faced a common and fixed cost to cheating, $c_\chi$, then we would not be able to sustain an equilibrium in which compliance, noncompliance, and cheating behavior could all be observed.}

\paragraph{Informational Structure.}  As in the baseline model, we assume that $g_1$ satisfies the MLRP with respect to $g_0$.  In this extension we additionally assume that $g_\chi$ satisfies the MLRP with respect to $g_0$, so that expected scores are higher conditional on cheating than on noncompliance. Our results that follow place no restriction on the MLRP ordering between cheating and compliance: either $g_1$ may dominate $g_\chi$ or vice versa. However, the non-monotonic (``two cut") structure of an optimal rule is most surprising when compliant behavior shifts the score distribution upward relative to cheating, and it is this case that we illustrate in Section \ref{Sec:Examples}. If cheating is associated with higher scores than compliance then two-cut rules arise in a straightforward way, since both very low and very high scores are  indicative of noncompliance (the latter via cheating).

\ \\ The introduction of cheaters introduces some subtleties into our problem when we allow behavior to respond to classification. For each $\beta \in \{0,1,\chi\}$, define $S_\beta(\delta)$ to be the conditional probability $i$ is rewarded ($d_i=1$) by $\delta$, if $i$ chooses behavior $\beta_i = \beta$:
 
\[
S_\beta(\delta)=\int_{\mathbf{R}}g_\beta(s) \cdot \delta(s) \; \dee s.
\]
Then, $i$'s expected payoffs for each behavior, $\beta_i \in \{0,1,\chi\}$, conditional on $r$, $\delta$, $c_i$, and $\kappa$, are:
\begin{equation}\label{payoffsCheating}
EU(\beta_i|\delta)=\left\{\begin{array}{ll} 
r \cdot S_1(\delta) +r_0 - c_i&\text{ if }\beta_i=1,\\
r\cdot S_\chi(\delta) +r_0 - \kappa \cdot c_i&\text{ if }\beta_i=\chi,\\
r\cdot S_0(\delta) +r_0 &\text{ if }\beta_i=0.
\end{array}\right.
\end{equation}
To simplify notation, define the following:
\begin{equation}
\label{Eq:BigDeltaCheating}
\begin{array}{rcl}
\Delta^{1\chi}_{\delta} & \equiv & S_1(\delta) -S_\chi(\delta) ,\\[0.2em]
\Delta^{10}_{\delta} & \equiv & S_1(\delta) -S_0(\delta) , \\[0.2em]
\Delta^{\chi 0}_{\delta} & \equiv & S_\chi(\delta) -S_0(\delta) .
\end{array}
\end{equation}
The quantity $\Delta^{ab}_\delta$ equals the difference in the probability of $d_i=1$ when choosing behavior $\beta_i=a$ rather than $\beta_i=b$, given $\delta$. Accordingly $i$'s optimal behavior is: 
\begin{equation}
\label{behavior3}
\beta_i^*(c_i \mid r,\delta,\kappa) \; = \; \begin{cases}
1 & \text{ if }\,\,\, c_i\leq \min\left\{r\cdot \Delta^{10}_\delta, \;\; r\cdot \dfrac{\Delta^{1\chi}_\delta}{1-\kappa} \right\},\\[1em]
\chi & \text{ if }\,\,\, c_i\in \left( r\cdot \dfrac{\Delta^{1\chi}_\delta}{1-\kappa} , \;\; r \cdot \dfrac{\Delta^{\chi 0}_\delta}{\kappa} \right),\,\,\, \\[1em]
0 & \text{ if }\,\,\, c_i\geq \max \left\{r\cdot \Delta^{10}_\delta, \;\; r\cdot \dfrac{\Delta^{\chi 0}_\delta}{\kappa} \right\}. \end{cases}
\end{equation}

\ \\ Equation \ref{behavior3} implies that there are two possible types of aggregate behavior in equilibrium. No cheating occurs in equilibrium if:
\[
\kappa \geq  \dfrac{\Delta^{\chi 0}_\delta}{\Delta^{\chi 0}_\delta+\Delta^{1\chi}_\delta} 
=\dfrac{\Delta_\delta^{\chi 0}}{\Delta_\delta^{10}}.
\]
Otherwise, some individuals will optimally cheat.\\~\\
The effect of cheating on the designer's objective function is captured by the equilibrium fraction of compliers, non-compliers and cheaters in the population. Respectively, these terms are:
\begin{equation}
\label{cheatPrevalence}
\begin{array}{rcl} 
F^1_\delta& \equiv & F\left(\min\{r\cdot \Delta^{10}_\delta, r\cdot \dfrac{\Delta^{1\chi}_\delta}{1-c}\}\right),\\[1em]
F^0_\delta & \equiv & 1-F\left(\max\{r\cdot \Delta^{10}_\delta, r\cdot \dfrac{\Delta^{\chi 0}_\delta}{c}\}\right),\\[1em]
F^\chi_\delta & \equiv & 1-F^1_\delta-F^0_\delta.
\end{array}
\end{equation}
Incorporating the possibility of cheating implies that the designer's objective is now: 
\begin{equation}\label{Cheat}
u_D(\delta)=F^1_\delta\Big(S_1\cdot A_1+(1-S_1)\cdot A_0 \Big)+ 
F^\chi_\delta\Big(S_\chi\cdot B_0+(1-S_\chi)\cdot B_1\Big) + 
F^0_\delta \Big(S_0\cdot B_0+(1-S_0)\cdot B_1\Big).
\end{equation}
  Note that $F^1_\delta$, $F^0_\delta$, and $F^\chi_\delta$ are continuous in $\delta$ and directionally differentiable, but not necessarily smooth. We now extend the proof strategy of Theorem \ref{quotaTheorem} to consider classification with cheating, and show that optimal classification is no longer necessarily characterized by a single threshold. Imposing the following additional condition on our signal distributions enables us to obtain a clean characterization of optimal classification with cheating.
  
  \begin{assumption}We assume that our signal distributions share the common exponential form $$g_\beta(s)=h(s)e^{a_\beta+b_\beta T(s)}$$ with $T(s)$ strictly monotone and differentiable, $h(s)>0$ and continuous, $a_\beta, b_\beta\in\mathbf{R}$, and $b_\beta\not=b_{\beta^\prime}$ for $\beta\not=\beta^\prime$.
\label{twoCutAss}
\end{assumption}

\noindent \textit{Remark.}  The proof of Theorem~\ref{twoCuts} relies on a ``two-crossing'' property of a linear combination of the densities $\{g_\beta\}$. Assumption~\ref{twoCutAss} provides a convenient sufficient condition for this property to hold and is satisfied by many natural and commonly used signal families including the Normal distribution. However, the two-crossing property we need is not unique to exponential-family densities. A fuller discussion of the assumption, including its relation to other signal environments, is provided in Appendix~\ref{assDis}.

\ \\ Theorem \ref{twoCuts} shows that when cheating is feasible, the optimal classifier---whether or not the designer faces a quota constraint---has at most two cutpoints under Assumption \ref{twoCutAss}. Thus, the optimal rule may no longer be characterized by a single threshold $\tau$, but the classification boundary remains low-dimensional: one cutpoint (a threshold rule) or two cutpoints (a two-cut rule). In other words, allowing cheating enlarges the set of candidate optimal rules, but does not lead to complex rules.

\begin{theorem} 
\label{twoCuts}
Suppose that $g_\beta$ share a common exponential form, as defined in Assumption \ref{twoCutAss}. The optimal classifier satisfying quota $q \in (0,1]$ is either a threshold rule or a two-cut rule.
 \end{theorem}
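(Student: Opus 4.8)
\noindent\textit{Proof plan.} The starting point is that the designer's objective \eqref{Cheat} and the quota value $v=F^1_\delta S_1(\delta)+F^\chi_\delta S_\chi(\delta)+F^0_\delta S_0(\delta)$ depend on $\delta$ \emph{only} through the three scalars $S_0(\delta),S_\chi(\delta),S_1(\delta)$, each a linear functional of $\delta$: the weights $F^1_\delta,F^\chi_\delta,F^0_\delta$ in \eqref{cheatPrevalence} are functions of the $\Delta^{ab}_\delta$, which are differences of the $S_\beta$, and the bracketed accuracy terms are affine in the $S_\beta$. Following the strategy of Theorem~\ref{quotaTheorem}, I would discretize $\mathbf{R}$ into $n$ ordered bins $\{B^1,\dots,B^n\}$, write $p^k_\beta=\int_{B^k}g_\beta$, and study the finite program of choosing $\tilde\delta_n\in[0,1]^n$ to maximize the objective subject to the quota, where now $S_\beta=\sum_k p^k_\beta\tilde\delta^k_n$; the aim is to show a binned optimum is, up to at most two mixing bins, a threshold or two-cut rule, and then to pass to the limit $n\to\infty$ exactly as in Theorem~\ref{quotaTheorem}, invoking the stated continuity of $F^1_\delta,F^\chi_\delta,F^0_\delta$ and of $u_D$ in $\delta$.

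Before the first-order analysis I would dispose of the non-smoothness. The kinks in $F^1_\delta,F^0_\delta,F^\chi_\delta$ arise from the $\min$ and $\max$ in \eqref{cheatPrevalence}, and they occur precisely across the hyperplane $\kappa S_1+(1-\kappa)S_0=S_\chi$ that separates the ``no-cheating'' halfspace $\{\kappa S_1+(1-\kappa)S_0\ge S_\chi\}$ from the ``cheating'' halfspace. On either closed halfspace the $\min$ and $\max$ resolve to fixed branches---on the no-cheating side both equal $r\Delta^{10}_\delta$, so $F^\chi_\delta\equiv 0$ and the objective collapses to \eqref{dUtility} in $(S_1,S_0)$; on the cheating side $F^1_\delta=F\big(r(S_1-S_\chi)/(1-\kappa)\big)$ and $F^0_\delta=1-F\big(r(S_\chi-S_0)/\kappa\big)$---so on each halfspace $u_D$ is a \emph{smooth} function of $(S_0,S_\chi,S_1)$, and the halfspace itself is one additional linear constraint in those three scalars. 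Since the global optimum's triple lies in one of the two closed halfspaces (on whose common boundary the two formulations agree), it suffices to solve each halfspace-restricted program and take the better of the two.

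The heart of the proof is the first-order characterization of a binned optimum $\tilde\delta^*_n$ of one such halfspace program. Because the objective, the quota, and the halfspace constraint all depend on $\tilde\delta_n$ only through the linear map $\tilde\delta_n\mapsto(S_0,S_\chi,S_1)$, while the remaining constraints are the box constraints $\tilde\delta^k\in[0,1]$, the Fritz--John conditions for this finite program yield a single vector $(\lambda_0,\lambda_\chi,\lambda_1)\in\mathbf{R}^3$ (assembled from the payoff partials together with the quota and halfspace multipliers) such that $\tilde\delta^{*k}_n=1$ wherever $\lambda_0 p^k_0+\lambda_\chi p^k_\chi+\lambda_1 p^k_1>0$, $\tilde\delta^{*k}_n=0$ wherever it is negative, and mixing is permitted only where it vanishes. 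Dividing by $p^k_0>0$, the controlling sign is that of $\lambda_0+\lambda_\chi\rho^k_\chi+\lambda_1\rho^k_1$, with $\rho^k_\beta=p^k_\beta/p^k_0$ the bin likelihood ratios. The exponential-family hypothesis is used precisely here: in a common exponential family, after a monotone reparametrization making the sufficient statistic $s$, MLRP forces $\rho_\chi(s)=c_\chi e^{as}$ and $\rho_1(s)=c_1 e^{bs}$ with $0<a<b$, so the curve $s\mapsto(\rho_\chi(s),\rho_1(s))$ is strictly convex (indeed $\rho_1\propto\rho_\chi^{\,b/a}$); a line meets a strictly convex curve in at most two points, so $\lambda_0+\lambda_\chi\rho_\chi(s)+\lambda_1\rho_1(s)$ changes sign at most twice in $s$. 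Hence the reward region of $\tilde\delta^*_n$ contains at most two mixing bins together with a $0/1$ pattern of at most three runs, which in the limit $n\to\infty$ (where the mixing bins have vanishing measure) is $\emptyset$, all of $\mathbf{R}$, a half-line (a positive or negative threshold rule), an interval $[\tau_L,\tau_H]$ (an inner two-cut rule, sign pattern $-,+,-$), or the complement of an interval (an outer two-cut rule, sign pattern $+,-,+$). Taking the limit and then the better of the two halfspace optima yields an optimal classifier of the asserted form; the degenerate reward sets $\emptyset$ and $\mathbf{R}$ correspond to limiting threshold rules, and $\delta\equiv 1$ is in any case infeasible when $q<1$.

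The main obstacle I anticipate is not a single computation but the non-concavity of the designer's objective in $(S_0,S_\chi,S_1)$: a direct supporting-hyperplane argument is unavailable, which is exactly why the discretize-then-limit route is needed, and the delicate point along that route is to verify that \emph{every} first-order multiplier enters only through the three linear functionals $S_\beta$, so that the bin-level score remains one linear combination $\lambda_0 g_0+\lambda_\chi g_\chi+\lambda_1 g_1$ rather than something that could oscillate. The kinks in the prevalence weights are a secondary nuisance, handled cleanly by the regime split (which stays linear in the $S_\beta$), and the limiting step is exactly as in Theorem~\ref{quotaTheorem}. The exponential-family assumption, beyond MLRP, is used only to cap at two the sign changes of $\lambda_0+\lambda_\chi\rho_\chi+\lambda_1\rho_1$, and that cap is precisely what bounds the number of cuts: with MLRP alone the two likelihood-ratio coordinates need not trace a convex curve, and the reward region could a priori have many components.
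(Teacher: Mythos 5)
Your proof is correct and follows essentially the same route the paper takes: discretize, extract a bin-level KKT score of the form $\lambda_0 g_0 + \lambda_\chi g_\chi + \lambda_1 g_1$ with bin-invariant multipliers, use the exponential-family structure to cap the number of sign changes of that score at two, and pass to the limit as in Theorem~\ref{quotaTheorem}. Two minor variations---you normalize by $g_0$ and argue that a line meets the strictly convex curve $s\mapsto(\rho_\chi(s),\rho_1(s))$ at most twice, whereas the paper normalizes by $g_\chi$ and does a sign-case analysis on $\tilde W'$ and $\tilde W''$; and you dispose of the kinks in $F^1_\delta, F^\chi_\delta, F^0_\delta$ by splitting into cheating/no-cheating halfspaces, whereas the paper works with directional derivatives and measurable subgradient selections---are equivalent (your regime split is arguably cleaner) but do not change the structure of the argument.
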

 
\noindent\textit{Why Theorem \ref{twoCuts} matters.} When individuals can ``cheat" by partially imitating compliance at lower cost, the optimal classifier may reward only a middle band (or its complement), producing a two-cut rule. This breaks the one-threshold structure common to standard strategic-classification and mechanism-design models. The result shows that optimal screening can, for example, deliberately make intermediate signals unattractive to improve the informational value of extreme behaviors. For CS/ML readers, this identifies a tractable setting where the optimal decision boundary is non-monotone yet fully characterized; for game theorists, it highlights how equilibrium incentives can endogenously generate pooling and exclusion regions reminiscent of non-monotone equilibria in signaling models.

\section{Illustrative Examples \label{Sec:Examples}}

Before concluding, we offer three simple examples in our setting in which global accuracy is maximized by a negative threshold or two-cut rule.  The first example (Example \ref{negCut}) considers a setting, without cheating, in which the global accuracy-maximizing algorithm is a negative threshold rule.\footnote{\cite{PennPatty25AlgEndog} recently showed such a rule to be optimal in a simpler setting with binary signals.}  We denote the CDF of the $\mathrm{Normal}(\mu,\sigma^2)$ distribution by $\Phi_{\mu,\sigma^2}$ and the PDF by $\phi_{\mu,\sigma^2}$.

\ \\\noindent \textbf{Example \exampleNum{negCut} \theexample .} \textit{Global Accuracy Maximized by a Negative Threshold Rule.}\\~\\
Suppose that: 
\begin{eqnarray*}
r & = & 4,\\
c_i & \sim & \mathrm{Normal}(1/2,1/2), \text{ and}\\
s_i \mid \beta_i & \sim & \mathrm{Normal}(\beta_i,1).
\end{eqnarray*}
In this setting, the ``sincere" level of compliance---the fraction of the population that would comply in the absence of any extrinsic incentives---is $F(0) = \Phi_{0.5,0.5}(0) = 16\%$.  Table \ref{Tab:NegativeThresholdGlobalAccuracy} reports the most accurate positive and negative threshold rules in this setting. 
\begin{table}[h!]
\centering
\begin{tabular}{|c|c||c|} \cline{2-3}
\multicolumn{1}{c|}{} &\textbf{Most Accurate} &\textbf{Most Accurate} \\ 
\multicolumn{1}{c|}{} &\textbf{Positive Threshold} &\textbf{Negative Threshold} \\ 

\multicolumn{1}{c|}{} & ($\tau^*=-0.26$) & ($\tau^*=-1.55$) \\ 
\hline
\textbf{Accuracy} 	&85\%&87\%\\ \hline
\textbf{Compliance} 	&91\%&7\%\\ \hline
\end{tabular}
\caption{Comparing Maximal Accuracy of Positive \& Negative Threshold Rules \label{Tab:NegativeThresholdGlobalAccuracy}}
\end{table}\\
In this setting, the most accurate positive threshold rule rewards signals above $\tau^*=-0.26$, and the most accurate negative threshold rule rewards signals \textit{below} $\tau^*=-1.55$.  As illustrated in Table \ref{Tab:NegativeThresholdGlobalAccuracy}, the most accurate threshold rule in this setting is the negative threshold rule.  By Theorem \ref{quotaTheorem}, this negative threshold rule is the most accurate classification algorithm in this setting.

Some intuition behind this conclusion is as follows.  Accuracy maximization depends on both the precision of the signal received by the algorithm (which depends on signal distribution $G$) \textit{and}  the base rate of compliance in a population (which depends on $F$). Importantly, the most challenging situation for an accuracy-maximizing classifier occurs when the prevalence of compliance in a population  is exactly 50\%.  Such a situation maximizes the uncertainty faced by a classifier regarding behavior $\beta_i$, because $D$ must rely solely on signal information $s_i$ in order to render a decision.  To provide some more insight into the differences between the two rules described in Table \ref{Tab:NegativeThresholdGlobalAccuracy}, the expected distribution of their outcomes are represented in confusion matrix form in Table \ref{Tab:NegativeThresholdGlobalAccuracyConfusionMatrix}.
\begin{table}[h!]
\centering
\begin{tabular}{|c|c|c||c|c|} \cline{2-5}
\multicolumn{1}{c|}{}  & \multicolumn{2}{|c||}{\textbf{Positive Threshold Rule}} & \multicolumn{2}{c|}{\textbf{Negative Threshold Rule}}\\ \cline{2-5}
\multicolumn{1}{c|}{} &$d=1$&$d=0$& $d=1$&$d=0$\\ \hline
$\beta=1$& 82\% \,\,(TP) &9 \%\,\,(FN)& $\approx$0\% \,\,(TP) &7 \%\,\,(FN)\\  \hline
$\beta=0$ &5\%\,\,(FP)& 3\%\,\,(TN)&6\%\,\,(FP)& 87\%\,\,(TN)\\ \hline
\end{tabular}
\caption{Comparing Positive \& Negative Threshold Confusion Matrices \label{Tab:NegativeThresholdGlobalAccuracyConfusionMatrix}}
\end{table}

 An interior positive threshold rule will always strictly \textit{increase} equilibrium compliance relative to sincere compliance, while an interior negative threshold rule will always strictly \textit{decrease} equilibrium compliance relative to sincere compliance.  If sincere compliance is relatively low, as it is here at 16\%, it may be optimal to commit to a negative threshold in order to most productively manipulate base rates. This type of rule commits to classifying extreme signals incorrectly, but the resulting base rate of 7\% enables the  classifier to improve global accuracy in the face of noisy signal information.

\ \\ The final two examples allow individuals to cheat ($\beta_i = \chi$) and demonstrate that each type of two-cut rule can be optimal.  Example \ref{twoCutExOuter} considers a setting where cheating ($\beta_i=\chi$) is difficult to distinguish from true compliance ($\beta_i=1$).  

\ \\\noindent \textbf{Example \exampleNum{twoCutExOuter} \theexample .} \textit{Global Accuracy Improved by an Outer Two-Cut Rule.}\\~\\
Suppose that cheating ($\beta_i=\chi$) is possible, and the parameters of the setting are as follows:
\begin{eqnarray*}
r & = & 3,\\
c_i & \sim & \mathrm{Normal}(0,1),\\
\kappa & = & 0.4, \;\;\;\;\;\;\;\;\;\;\;\;\;\;\;\;\;\;\;\;\;\;\;\;\;\;\;\;\;\;\;\;\;\;\;\;\;\;\;\;\;\;\;\;\; \text{ and}\\
s_i \mid \beta_i & \sim & \begin{cases}
\mathrm{Normal}(0,1) & \text{ if }\beta = 0,\\
\mathrm{Normal}(1.75,1) & \text{ if }\beta = \chi,\\
\mathrm{Normal}(2,1) & \text{ if }\beta = 1.
\end{cases}
\end{eqnarray*}
In this setting, the sincere level of compliance is $F(0) = \Phi_{0,1}(0)=50\%$.  
Relative to our population in Example \ref{negCut}, this population faces, on average, lower costs to compliance. Moreover, cheating is potentially very effective here, because the distribution of signals conditional on cheating (\textit{i.e.}, $s_i \mid \beta_i=\chi$) is ``closer'' to the distribution of signals conditional on compliance (\textit{i.e.}, $s_i \mid \beta_i=1$) than it is to the distribution of signals conditional on non-compliance (\textit{i.e.}, $s_i \mid \beta_i=0$).

\begin{figure}[h!]
\centering
\framebox{
\begin{tabular}{cc}
\includegraphics[scale=0.75]{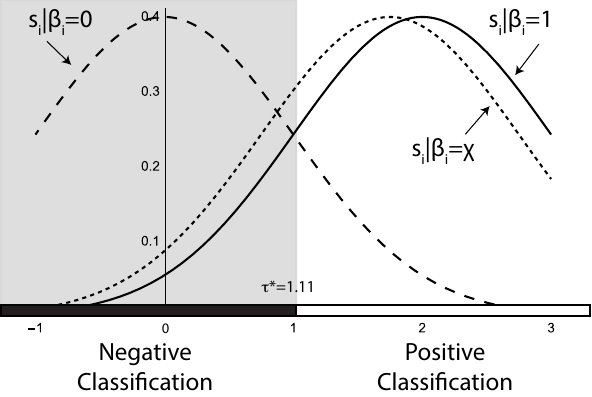} & \includegraphics[scale=0.75]{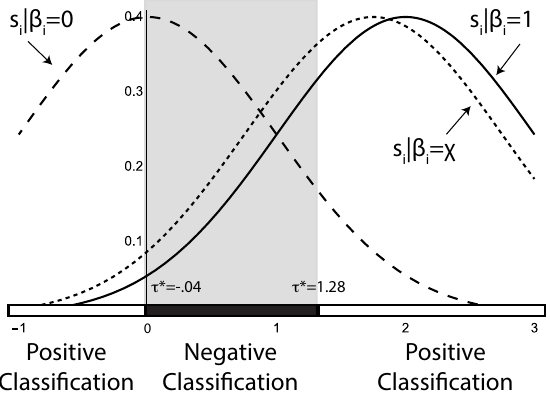}\\
\textbf{Optimal one-cut rule} & \textbf{Better two-cut rule}
\end{tabular}
}
\caption{Comparing Threshold and Outer Two-Cut Rules \label{Fig:TwoRulesWithCheatingOuter}}
\end{figure}

For these parameter values, the optimal threshold rule sets a positive threshold equal to $\tau^*=1.11$.\footnote{For clarity, we do not report the most accurate negative threshold rule in this setting, as it is out-performed by the positive threshold.} We find a more accurate two-cut rule  assigns the negative classification outcome only to those individuals whose signals fall on the interval $(-.04, 1.28)$.  In this example and the next we say that our two-cut rules are ``more accurate" but not ``optimal" simply because we cannot verify analytically that this two-cut rule achieves the global optimum across all possible two-cut configurations. The two-cut rules we present dominated all other two-cut configurations (including both inner and outer two-cut rules) in our calculations.  Importantly, however, we \textit{can} verify that these rules strictly dominate every single-threshold rule under the same primitives. Accordingly, these examples demonstrate that no monotone threshold rule attains the designer's optimum in this environment.  Figure \ref{Fig:TwoRulesWithCheatingOuter} displays the two rules, along with the three $\beta_i$-conditional signal distributions.\footnote{The curves correspond to the conditional signal densities (\textit{i.e.}, PDFs) for compliers (solid), cheaters (small-dash), and non-compliers (large-dash).}

Table \ref{Tab:OuterTwoGlobalAccuracy} reports the accuracy, compliance, and probability of cheating under both rules. It shows that the two classifiers produce similar levels of accuracy (with the two-cut rule representing a slight improvement) and similar levels of compliance (with the positive threshold rule inducing slightly more compliance). However, the two-cut rule reduces cheating behavior by about $14\%$.

\begin{table}[h!]
\centering
\begin{tabular}{|c|c||c|} \cline{2-3}
\multicolumn{1}{c|}{} & \textbf{Most Accurate} &\textbf{More Accurate} \\ 
\multicolumn{1}{c|}{} & \textbf{Positive Threshold} &\textbf{Outer Two-Cut Rule} \\ 
\multicolumn{1}{c|}{} & $\tau^*=1.11$&$(\tau^*_L, \tau^*_H)=(-.04, 1.28)$\\ \hline
\textbf{Accuracy} &61\%& 62\%\\ \hline
\textbf{Compliance} &65\%&63\% \\ \hline
\textbf{Cheating} &35\%& 21\%\\ \hline
\end{tabular}
\caption{Comparing the Accuracy of Positive \& Outer Two-Cut Rules \label{Tab:OuterTwoGlobalAccuracy}}
\end{table}

The intuition for this finding is that a two-cut rule induces individuals to self-sort away from cheating behavior: by committing to positively classify some low signals, some people that might previously have been motivated to cheat are less motivated to do so. The classifier consequently incentivizes people to generate a distribution of data that it can more easily classify accurately.  Of course, the downside  is that by making the data easier to sort, the classifier must commit to sorting some signals incorrectly. In this case the trade-off is, at the margin, worth it.

\begin{table}[h!]
\centering
\begin{tabular}{|c|c|c||c|c|} \cline{2-5}
\multicolumn{1}{c|}{}  & \multicolumn{2}{|c||}{\textbf{Positive Threshold Rule}} & \multicolumn{2}{c|}{\textbf{Outer Two-Cut Rule}}\\ \cline{2-5}
\multicolumn{1}{c|}{} &$d=1$&$d=0$& $d=1$&$d=0$\\ \hline
$\beta=1$& 52\% \,\,(TP) &12\%\,\,(FN)& 49\% \,\,(TP) &14\%\,\,(FN)\\  \hline
$\beta\neq 1$ &26\%\,\,(FP)& 9\%\,\,(TN)&22\%\,\,(FP)& 13\%\,\,(TN)\\ \hline
\end{tabular}
\caption{Comparing Positive Threshold \& Outer Two-Cut Confusion Matrices \label{Tab:OuterTwoCutRuleGlobalAccuracyConfusionMatrix}}
\end{table}
Table \ref{Tab:OuterTwoCutRuleGlobalAccuracyConfusionMatrix} reports the classification outcome distributions under each rule and  provides another illustration of the similar performance of the most accurate threshold rule and the better  two-cut rule in this setting.  Although their overall performance is similar, the underlying decision boundaries differ markedly, as Figure \ref{Fig:TwoRulesWithCheatingOuter} makes clear. 

To conclude our examples, we now present a setting where the designer benefits from choosing a classification algorithm designed to \textit{incentivize} cheating, conceptually similar to the use of a negative threshold rule in Example \ref{negCut} to incentivize non-compliance. In Example \ref{twoCutExInner} the designer chooses an algorithm that is an \textit{inner} two-cut rule.

\ \\\noindent \textbf{Example \exampleNum{twoCutExInner}  \theexample .} \textit{Global Accuracy Improved by an Inner Two-Cut Rule.}\\~\\ 
Suppose that cheating ($\beta_i=\chi$) is possible, and the parameters of the setting are as follows:
\begin{eqnarray*}
r & = & 4,\\
c_i & \sim & \mathrm{Normal}(1/2,1/2),\\
\kappa & = & 0.6, \;\;\;\;\;\;\;\;\;\;\;\;\;\;\;\;\;\;\;\;\;\;\;\;\;\;\;\;\;\;\;\;\;\;\;\;\;\;\;\;\;\;\;\;\; \text{ and}\\
s_i \mid \beta_i & \sim & \begin{cases}
\mathrm{Normal}(0,1) & \text{ if }\beta = 0,\\
\mathrm{Normal}(1.5,1) & \text{ if }\beta = \chi,\\
\mathrm{Normal}(2,1) & \text{ if }\beta = 1.
\end{cases}
\end{eqnarray*}
In this setting, the sincere level of compliance is $F(0) = 16\%$ (as in Example \ref{negCut}).  Relative to the previous example with cheating (Example \ref{twoCutExOuter}), this population has comparatively high costs to compliance, and the optimal threshold rule is a negative threshold rule with $\tau^*=-1.5$. This rule will disincentivize compliance by committing to reward signals below $\tau^*$.  Of course, the downside of such a rule for accuracy is that it has to commit to misclassifying the tails of the signal distribution. Positively classifying an ``inner cut" mitigates some of these losses by positively classifying an inner slice of the signal distribution while still disincentivizing compliant behavior. We find that a more accurate inner two-cut rule assigns the positive classification outcome to individuals with signals on the interval $[0.85, 1.08]$. Figure \ref{Fig:TwoRulesWithCheatingInner} visualizes the most accurate (negative) threshold rule and better inner two-cut rule, along with the $\beta_i$-conditional PDFs.
\begin{figure}[h!]
\centering
\framebox{
\begin{tabular}{cc}
\includegraphics[scale=0.75]{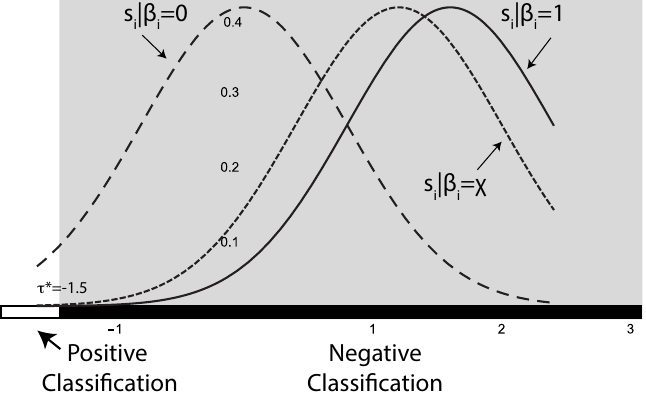} & \includegraphics[scale=0.75]{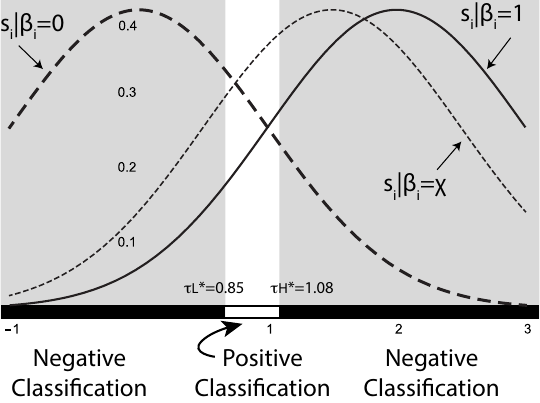}\\
\textbf{Optimal one-cut rule} & \textbf{Better two-cut rule}
\end{tabular}
}
\caption{Comparing Threshold and Inner Two-Cut Rules \label{Fig:TwoRulesWithCheatingInner}}
\end{figure}

Table \ref{Tab:InnerTwoGlobalAccuracy} reports the accuracy, compliance, and probability of cheating under these two rules.  It illustrates that the two classifiers produce similar levels of accuracy (with the two-cut rule representing a slight improvement) and virtually identical levels of compliance.  However, the two-cut rule \textit{induces} cheating behavior by about $18\%$ of the population that would have been non-compliant under a negative threshold rule. By committing to positively classify a closed interval of signals, this rule muddies its own distribution of data by inducing cheating behavior. However this is offset by improved classification on the left tail and on the mass of compliant types with signals on the interval. Again, at the margin the two-cut rule is more accurate in this setting. Table \ref{Tab:InnerTwoCutRuleGlobalAccuracyConfusionMatrix} displays the distributions over outcomes produced by the two rules, which are very similar (because cheating is equivalent to $\beta_i=0$ in the confusion matrix setting).
\begin{table}[H]
\centering
\begin{tabular}{|c|c||c|} \cline{2-3}
\multicolumn{1}{c|}{} & \textbf{Most Accurate} &\textbf{More Accurate} \\ 
\multicolumn{1}{c|}{} & \textbf{Negative Threshold} &\textbf{Inner Two-Cut Rule} \\ 
\multicolumn{1}{c|}{} & $\tau^*= -1.5$ & $(\tau^*_L, \tau^*_H)=(.85, 1.08)$\\ \hline
\textbf{Accuracy} &		87\%		&	88\%\\ \hline
\textbf{Compliance} &	6.5\%	&	6.5\% \\ \hline
\textbf{Cheating} &	$\approx$0\%	& 	18\%\\ \hline
\end{tabular}
\caption{Comparing Maximal Accuracy of Negative \& Inner Two-Cut Rules \label{Tab:InnerTwoGlobalAccuracy}}
\end{table}

\begin{table}[H]
\centering
\begin{tabular}{|c|c|c||c|c|} \cline{2-5}
\multicolumn{1}{c|}{}  & \multicolumn{2}{|c||}{\textbf{Negative Threshold Rule}} & \multicolumn{2}{c|}{\textbf{Inner Two-Cut Rule}}\\ \cline{2-5}
\multicolumn{1}{c|}{} &$d=1$&$d=0$& $d=1$&$d=0$\\ \hline
$\beta=1$& $\approx$0\% \,\,(TP) &7\%\,\,(FN)& $\approx$0\% \,\,(TP) &6\%\,\,(FN)\\  \hline
$\beta\neq 1$ &6\%\,\,(FP)& 87\%\,\,(TN)& 6\%\,\,(FP)& 88\%\,\,(TN)\\ \hline
\end{tabular}
\caption{Comparing Negative Threshold \& Inner Two-Cut Confusion Matrices \label{Tab:InnerTwoCutRuleGlobalAccuracyConfusionMatrix}}
\end{table}

\section{Conclusion \label{Sec:Conclusion}}

This article develops a theory of globally optimal classification when the behavior being classified responds strategically to the classifier.  Unlike standard learning settings in which the feature-label distribution is fixed, here the classifier induces incentives in a population, those incentives change behavior, and the ensuing behavior endogenously reshapes the distribution of signals on which the classifier operates.  We characterize the resulting Stackelberg equilibrium and show that, despite these feedback effects, globally optimal classifiers lie in a small and tractable family.

Across baseline environments, capacity constraints, and settings in which individuals can partially mimic compliance, optimal classifiers are either a single threshold rule or a two-cut rule.  These results show that strategic responses do not make optimal classification arbitrarily complex: the space of optimal classifiers remains low-dimensional, interpretable, and implementable.  The ``hard part'' of strategic classification is not the complexity of best responses but understanding how the classifier reshapes the population to which it is applied.

Recognizing that classification can reshape a population in potentially harmful ways leads us to also consider the behavioral effects of the classification objectives  guiding a designer. Different objectives induce different patterns of behavioral  feedback, and some classification objectives can make harmful equilibria inevitable.  We identify a class of designer objectives for which certain behavioral pathologies never arise in equilibrium: a designer with these objectives cannot rationally induce lower compliance in one group and higher compliance in another.  These results demonstrate that certain adjustments to classification objectives---such as adopting precision-seeking criteria rather than accuracy
maximizing criteria---can eliminate harmful equilibria.

Our analysis of cheating, or feature manipulation, also illustrates that optimal classification can deliberately shape the \textit{informational environment}.  A two-cut rule can make intermediate signals unattractive and induce agents to self-sort into more separated behaviors, yielding a cleaner distribution of data.  These non-monotone rules are unusual in machine learning and mechanism design, yet arise naturally in our framework. We note that similar incentive effects can be approximated by alternative, more familiar mechanisms---such as thresholds coupled with lotteries over scarce positive decisions---which can also deter cheating at the cost of some efficiency.

Taken together, our results highlight two design principles.  First, classification systems should be understood as incentive mechanisms and evaluated not only by their predictive performance but by the behavioral and distributional consequences they generate.  Second, fairness and efficiency tradeoffs may arise not from data imbalance or model misspecification but from the equilibrium feedback effects inherent in strategic environments.  More broadly, our framework offers a foundation for studying how different classification objectives can translate into behavioral, informational, and welfare outcomes in the populations to which classifiers are applied.

\newpage

\bibliographystyle{apsr}
\bibliography{john-book}

\appendix 

\section{Technical Appendix \label{Sec:Appendix}}

\subsection*{Notation Summary}

\begin{table}[H]
\centering
\begin{tabular}{ll}
\toprule
Symbol & Meaning \\
\midrule
$\beta$ & Underlying (unobserved) individual type \\
$s \in \mathcal{S}$ & Observed signal used for classification \\
$\delta(s) \in \{0,1\}$ & Classifier decision rule (reject/accept) \\
$c$ & Cost of compliance for an individual \\
$r$ & Net reward from receiving the positive classification \\
$F, f$ & CDF and PDF of compliance costs in the population \\
$G_0, G_1$ & Signal distributions conditional on behavior \\
$\mathrm{TPR}(\delta)$ & True positive rate induced by classifier $\delta$ \\
$\mathrm{FPR}(\delta)$ & False positive rate induced by classifier $\delta$ \\
$\Delta_\delta$ & Incentives gap $\mathrm{TPR}(\delta)-\mathrm{FPR}(\delta)$ \\
$\pi(\delta)$ & Compliance rate induced by $\delta$, or $F(r\cdot \Delta_\delta)$\\
\bottomrule
\end{tabular}
\caption{Baseline notation.}
\end{table}

\begin{table}[h!]
\centering
\begin{tabular}{ll}
\toprule
Setting & Optimal Classifier Structure \\
\midrule
Baseline (no quota, no cheating) & Single monotone threshold \\
Quota constraint & Single monotone threshold \\
Cheating feasible & Threshold rule or two--cut rule \\
\bottomrule
\end{tabular}
\caption{Summary of Optimal Algorithm Structure}
\end{table}

\subsection{\label{appProofs} Proofs}

For any given signal distribution $G=(G_0,G_1)$ we denote $\delta$'s true positive rate (TPR), false positive rate (FPR), true negative rate (TNR), and false negative rate (FNR) as follows:
\begin{eqnarray*}
TPR(\delta\mid G) & \equiv & \Pr[d_i=1\mid \beta_i=1, \delta, G]=\int_{\mathbf{R}} g_1(s) \cdot \delta(s) \dee s,\\
FPR(\delta\mid G) & \equiv & \Pr[d_i=1\mid \beta_i=0, \delta, G]=\int_{\mathbf{R}} g_0(s) \cdot \delta(s) \dee s,\\
TNR(\delta\mid G) & \equiv & \Pr[d_i=0\mid \beta_i=0, \delta, G]=\int_{\mathbf{R}} g_0(s) \cdot (1-\delta(s)) \dee s,\\
FNR(\delta\mid G) & \equiv & \Pr[d_i=0\mid \beta_i=1, \delta, G]=\int_{\mathbf{R}} g_1(s) \cdot (1-\delta(s)) \dee s.
\end{eqnarray*}
We restrict the classifiers to be measurable and almost everywhere continuous functions from $\mathbf{R}$ into the unit interval, $[0,1]$:
\[
\mathcal{D} \equiv \{\delta: \mathbf{R}\rightarrow[0, 1]\,\,\text{ a.e }\}\subset L^\infty(\mathbf{R}).
\]
For any signal distribution, $G$, we denote the set of $(TPR(\delta), FPR(\delta))$ pairs that can be achieved by an algorithm, $\delta \in \mathcal{D}$, as follows:
\[
\mathcal{P}_G=\{(TPR(\delta\mid G), FPR(\delta\mid G)): \delta \in\mathcal{D}\},
\]
and when no confusion can arise, we write $\mathcal{P} \equiv \mathcal{P}_G$. The following lemma enables us to conceptualize the designer's choice of an optimal algorithm $\delta(s)$ as a choice of any pair, $(TPR(\delta), FPR(\delta))$, subject to the signal distribution, $G=(G_0,G_1)$.  Lemma \ref{Slemma} allows us to treat the search over classifiers as a compact, convex optimization problem, ensuring that an optimal classifier exists.
\begin{lemma} 
\label{Slemma}
For any $G=(G_0,G_1)$ satisfying our assumptions, 
\begin{itemize}
    \item $\mathcal{D}$ is compact and convex in the weak-$\ast$ topology and 
    \item $\mathcal{P}$ is compact and convex in the usual Euclidean topology on $\mathbf{R}^2$. 
\end{itemize}
\end{lemma}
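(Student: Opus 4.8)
The plan is to derive both bullet points from the Banach--Alaoglu theorem, viewing $L^\infty(\mathbf{R})$ as the dual of $L^1(\mathbf{R})$ (Lebesgue measure on $\mathbf{R}$ is $\sigma$-finite, so this duality is valid) and using that the signal densities $g_0,g_1$ lie in $L^1(\mathbf{R})$. For the argument it is convenient to read $\mathcal{D}$ as the order interval $\{\delta\in L^\infty(\mathbf{R}): 0\le\delta\le 1 \text{ a.e.}\}$ equipped with the weak-$\ast$ topology $\sigma(L^\infty,L^1)$; this is the reading under which the compactness claim is meaningful, and it changes none of the $TPR$/$FPR$ computations, which only involve integration against the densities.

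\emph{Convexity and weak-$\ast$ compactness of $\mathcal{D}$.} Convexity is immediate, since a convex combination of two $[0,1]$-valued measurable functions is again a $[0,1]$-valued measurable function. For compactness, note $\mathcal{D}$ is contained in the closed unit ball of $L^\infty(\mathbf{R})$, which is weak-$\ast$ compact by Banach--Alaoglu, so it suffices to show $\mathcal{D}$ is weak-$\ast$ closed. For every nonnegative $\phi\in L^1(\mathbf{R})$ the evaluation map $\delta\mapsto\int_{\mathbf{R}}\delta(x)\phi(x)\,dx$ is weak-$\ast$ continuous by definition of the topology, so the sets $\{\delta:\int\delta\phi\,dx\ge 0\}$ and $\{\delta:\int(1-\delta)\phi\,dx\ge 0\}$ are weak-$\ast$ closed. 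A measurable function satisfies $0\le\delta\le 1$ a.e.\ if and only if it passes both of these tests for every nonnegative $\phi\in L^1$ (it suffices to test against indicators of finite-measure sets), so $\mathcal{D}$ is an intersection of weak-$\ast$ closed sets, hence weak-$\ast$ closed; a closed subset of a compact set is compact.

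\emph{Compactness and convexity of $\mathcal{P}$.} Define $T:\mathcal{D}\to\mathbf{R}^2$ by $T(\delta)=(TPR(\delta\mid G),\,FPR(\delta\mid G))=(\int_{\mathbf{R}}g_1\delta\,dx,\ \int_{\mathbf{R}}g_0\delta\,dx)$. Since $g_0,g_1$ are probability densities, they lie in $L^1(\mathbf{R})$, so each coordinate of $T$ is a weak-$\ast$ continuous linear functional; thus $T$ is continuous and affine. Then $\mathcal{P}=T(\mathcal{D})$ is the continuous image of a weak-$\ast$ compact set, hence compact in $\mathbf{R}^2$, and it is convex because $T$ is affine on the convex set $\mathcal{D}$: $T(\lambda\delta_1+(1-\lambda)\delta_2)=\lambda T(\delta_1)+(1-\lambda)T(\delta_2)$.

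\emph{Main obstacle.} No step is deep; the only point that needs care is showing $\mathcal{D}$ is weak-$\ast$ \emph{closed} (not merely norm-closed), since that is precisely what upgrades Banach--Alaoglu into compactness of $\mathcal{D}$ itself. It is also worth noting that the lemma uses none of the MLRP, full-support, or smoothness hypotheses on $G$ beyond $g_0,g_1\in L^1(\mathbf{R})$; the phrase ``satisfying our assumptions'' appears only for uniformity with the rest of the paper.
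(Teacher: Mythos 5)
Your proof is correct and follows essentially the same route as the paper's: Banach--Alaoglu for compactness of the unit ball, weak-$\ast$ closedness of $\mathcal{D}$, and then $\mathcal{P}$ as the continuous image of a compact convex set under the affine map $T$. The one genuine difference is in how closedness of $\mathcal{D}$ is established: the paper argues by picking a weak-$\ast$ convergent sequence $\delta_n\to\delta$ and deriving a contradiction from a test function supported on a set where $\delta$ would violate the bounds, whereas you write $\mathcal{D}$ directly as the intersection
\[
\bigcap_{\phi\in L^1,\ \phi\ge 0}\Big\{\delta:\int\delta\phi\ge 0\Big\}\cap\Big\{\delta:\int(1-\delta)\phi\ge 0\Big\},
\]
an intersection of weak-$\ast$ closed half-spaces. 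Your version is cleaner and slightly more rigorous: a sequential argument only certifies sequential closedness, which coincides with closedness here only because $L^1(\mathbf{R})$ is separable (so the weak-$\ast$ topology is metrizable on bounded sets), a fact the paper leaves implicit. Expressing $\mathcal{D}$ as an intersection of closed sets sidesteps that issue entirely. Your remark that only $g_0,g_1\in L^1$ is used, and not MLRP, full support, or smoothness, is also accurate.
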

\begin{proof}
Our proof proceeds as follows.  We first prove convexity of $\mathcal{P}$ and $\mathcal{D}$ in the weak-$\ast$ topology and then prove that these spaces are compact in the weak-$\ast$ topology.\\~\\

\noindent \textit{Convexity of $\mathcal{P}$ and $\mathcal{D}$.} Take any $\delta^1, \delta^2\in\mathcal{D}$ and, for any $\lambda\in [0,1]$, define 
\[
\delta^\lambda \equiv \lambda \cdot \delta^1 + (1-\lambda) \cdot \delta^2.
\]
Because $\delta^1, \delta^2\in\mathcal{D}$ are each measurable and take on values in $[0, 1]$ almost everywhere, $\delta^\lambda$ also satisfies these properties. Because $\lambda \in [0,1]$ is arbitrary and $\delta^\lambda$ is a valid classifier for each $\lambda \in [0,1]$, this implies that $\mathcal{D}$ is convex. Similarly, 
\begin{eqnarray*}
TPR(\delta^\lambda) & = & \lambda \cdot TPR(\delta^1)+(1-\lambda) \cdot TPR(\delta^2), \;\;\;\;\; \text{ and }\\ 
FPR(\delta^\lambda) & = & \lambda \cdot FPR(\delta^1)+(1-\lambda) \cdot FPR(\delta^2),
\end{eqnarray*}
which, together, imply that $\left( TPR(\delta^\lambda), FPR(\delta^\lambda)\right) \in\mathcal{P}$.  Thus, $\mathcal{P}$ is also convex. \\

\noindent \textit{Compactness of $\mathcal{P}$ and $\mathcal{D}$.} Consider the normed space $L^1(\mathbf{R})$\footnote{$L^1(\mathbf{R})$ denotes the space of absolutely integrable real-valued functions on $\mathbf{R}$.} and its dual space, $L^\infty(\mathbf{R})$. Beginning with $\mathcal{D}$, note that $\mathcal{D} \subset L^\infty(\mathbf{R})$.  Endowing the space $L^\infty$ with the weak-$\ast$ topology implies the following characterization of convergence in $\mathcal{D}$:\footnote{The weak-$\ast$ topology on $\mathcal{D}$ identifies any $\delta$ with its equivalence class in $L^\infty(\mathbf{R})$, implying $0\leq\delta(s)\leq 1$ for almost all $s \in \mathbf{R}$.}
\[
\delta_n\rightarrow \delta \;\; \text{ if and only if } \;\; \int_\mathbf{R}f\cdot \delta_n \;\; \rightarrow \;\; \int_{\mathbf{R}}f\cdot\delta \text{ for every }f\in L^1.
\]
By the Banach-Alaoglu Theorem, the closed unit ball of a dual space is weak-$\ast$ compact. Since $L^\infty$ is the continuous dual of $L^1$ (\textit{i.e.}, $L^\infty=(L^1)^*$), the closed unit ball $\{\delta:||\delta||_\infty\leq 1\}$ is weak-$\ast$ compact. We know that $\mathcal{D}\subset \{\delta:||\delta||_\infty\leq 1\}$, and so our set of valid classifiers is a subset of a compact set.

We now show that $\mathcal{D}$ is closed in the weak-$\ast$ topology. Choose any convergent sequence $\delta_n$, with $\delta_j \in\mathcal{D}$ for all $j$, and let $\delta = \lim_{n\to \infty} \delta_n$ denote its limit.  We need to show that $\delta\in\mathcal{D}$, which will hold if $\delta(s) \in [0,1]$ for all $s \in \mathbf{R}$ except on a subset of $\mathcal{R}$ possessing Lebesgue measure zero. 

To do so, fix any nonnegative absolutely integrable function, $\phi \in L^1$. Because $\delta_n\in \mathcal{D}$ implies that $\delta_n(s)>0$ for almost all $s \in \mathbf{R}$, it must be the case that $\int_{\mathbf{R}}\phi(s) \cdot\delta_n(s) \dee s \geq 0$, which then implies that $\int_{\mathbf{R}}\phi(s) \cdot\delta(s) \dee s\geq 0$. 

Fix any set $E \subset \mathbf{R}$ possessing positive Lebesgue measure in $\mathcal{R}$, and any positive real number $Z>0$ such that $E\cap [-Z, Z]$ possesses positive Lebesgue measure in $\mathbf{R}$, and let \footnote{Formally, we intersect $E$ with $[-Z, Z]$ in order to ensure $\phi$ stays in $L^1$, because $\phi$ must integrate to a finite number.}
\[
\phi(s)=\begin{cases}
1 & \text{ if } s \in E\cap [-N, N], \\
0 & \text{ otherwise.}
\end{cases}
\]
It follows immediately that
\[
\int_{\mathbf{R}} \phi(s) \cdot \delta(s) \dee s<0,
\]
which results in a contradiction of the supposition that $\delta$ is the limit of a sequence $\delta_n$ with $\delta_n\in \mathcal{D}$ for all $n \in \{1,2,\ldots\}$.  Thus, we can conclude that $\delta(s)\geq 0$ for all $s$ except at most a subset of $\mathbf{R}$ possessing Lebesgue measure zero.  An identical argument proves $\delta\leq 1$.  Putting these together implies that, $\delta(s)\in [0,1]$ almost everywhere. Because $\delta_n$ is arbitrary, this establishes that $\mathcal{D}$ is closed in the weak-$\ast$ topology.  Finally, because $\mathcal{D}$ is a closed subset of a compact set, it thus follows that $\mathcal{D}$ is compact in the weak-$\ast$ topology.

Turning to $\mathcal{P}$, define
\[
T(\delta) \;\; \equiv \;\; (\; TPR(\delta), FPR(\delta)\; ) \;\; = \;\; \left(\int_{\mathbf{R}} g_1(s) \cdot \delta(s) \dee s, \int_{\mathbf{R}} g_0(s) \cdot \delta(s) \dee s\right).
\]
Note that the two coordinates of $T$ (and, hence, $T$ itself) are continuous in the weak-$\ast$ topology. It's immediate that $T$ is a weak-$\ast$ continuous mapping from $\mathcal{D}$ into $\mathbf{R}^2$. Putting these together, we've shown that  $\mathcal{P}=T(\mathcal{D})$ is the continuous image of a compact set. It follows that $\mathcal{P}$ is compact.
\end{proof}

\ \\ \textbf{Theorem \ref{quotaTheorem}}
\textit{Suppose $\delta$ can reward no more than $q\in(0, 1]$ of the individuals. The optimal classifier, $\delta_q^*$, is a threshold or negative threshold rule.}
\begin{proof}
Our proof proceeds in three steps. First, we discretize our signal space into a finite, ordered set of bins. Then, we set up and solve our constrained optimization problem on the discretized signal space, showing the optimal classifier is weakly monotonic (in the bin index, $k$) and involves non-degenerate randomization on at most one bin. To conclude, we then prove convergence (as the number of bins grows without bound) of the optimal binned-classifiers to the optimal classifier when the signal space is the continuum, $\mathbf{R}$.\\

 \noindent \textit{Step 1: Discretizing the Signal Space.} Fix $n \in \{1,2,\ldots\}$ and partition $\mathbf{R}$ into intervals, $\mathcal{B}_n = \{B^1, ..., B^n\}$, ordered as follows:
 \begin{eqnarray*}
 j<k & \Rightarrow & \sup[B^j] \leq \inf[B^j].
 \end{eqnarray*}
Then, for any $n \in \{1,2,\ldots\}$ and each bin $j \in \{1,\ldots,n\}$, define:\footnote{In words, $\gamma_{\beta, j}$ is the mass of the signal distribution $G_\beta$ on bin $B^j$, and $h^j$ is the difference between these $\beta$-conditional masses on $B^j$. Also note that, because $G_0$ and $G_1$ have full support on $\mathbf{R}$, $\min[\gamma_{0, j},\gamma_{1, j}]>0$ for all $j \in \{1,2,\ldots,n\}$.}
\begin{eqnarray*}
\gamma_{\beta, j} & \equiv & \int_{B^j} g_{\beta}(s) \; \dee s \;\;\;\;\; \text{and} \\
h^j & \equiv & \gamma_{1, j} - \gamma_{0, j}.
\end{eqnarray*}

We denote an arbitrary $\mathcal{B}_n$-coarsened rule by an $n$-dimensional vector, $\tilde{\delta}_n= (\tilde{\delta}_n^1, \tilde{\delta}_n^2,\ldots, \tilde{\delta}_n^n) \in [0,1]^n$, and we require that $\tilde{\delta}_n$ be measurable with respect to $\mathcal{B}_n$:
\[
s \in B^k \text{ and } s' \in B^k \Rightarrow \tilde{\delta}_n(s) = \tilde{\delta}_n(s').
\]
The true positive and false positive rates of $\tilde{\delta}_n$ are then defined as follows:
\begin{eqnarray*}
    TPR(\tilde{\delta}_n) & \equiv & \sum_{j=1}^n \gamma_{1, j} \cdot \tilde{\delta}_n^j, \text{ and}\\
    FPR(\tilde{\delta}_n) & \equiv & \sum_{j=1}^n \gamma_{0, j} \cdot \tilde{\delta}_n^j.
\end{eqnarray*}
As above, individual $i$'s incentive to choose $\beta_i=1$ under $\left\{\delta^j\right\}_{j=1}^n$ is defined as follows:
\[
\Delta_{\tilde{\delta}_n}\equiv \sum\limits_{j=1}^n h^j\cdot \tilde{\delta}_n^j = TPR(\tilde{\delta}_n) - FPR(\tilde{\delta}_n).
\]
To conserve space, we denote the prevalence induced by $\tilde{\delta}_n$ by
\[
\pi\left(\tilde{\delta}_n\right) = F\left( r\cdot \Delta_{\tilde{\delta}_n} \right).
\]

\noindent \textit{Step 2: Solving the Constrained Optimization Problem.} We now turn to the problem of finding the optimal classifier, $\tilde{\delta}_n^*$, given $\mathcal{B}_n$ and the quota, $q$. The Lagrangian for this constrained optimization problem is:
\begin{eqnarray*}
\mathcal{L}(\tilde{\delta}_n)&= &\underbrace{\sum_j \Big(\pi(\tilde{\delta}_n) \cdot \gamma_{1, j} \cdot (A_1\cdot \tilde{\delta}_n^j+A_0\cdot(1-\tilde{\delta}_n^j)) + 
(1-\pi(\tilde{\delta}_n)) \cdot \gamma_{0, j} \cdot (B_0\cdot\tilde{\delta}_n^j + B_1 \cdot (1-\tilde{\delta}_n^j))\Big)}_{\text{Designer's Objective}}\\
& & -\underbrace{\lambda \sum_j \Big( \pi(\tilde{\delta}_n) \cdot \gamma_{1,j} \cdot \tilde{\delta}_n^j + (1-\pi(\tilde{\delta}_n)) \cdot \gamma_{0, j} \cdot \tilde{\delta}_n^j -q \Big)}_{\text{Quota Constraint}} +\underbrace{\sum_j \mu^j \cdot (1-\tilde{\delta}_n^j)+\sum_j \nu^j \cdot \tilde{\delta}_n^j.}_{\text{Feasibility Constraint: } \tilde{\delta}_b^j \in [0,1]}
\end{eqnarray*}

\noindent The first order conditions with respect to $\tilde{\delta}_n^j$ are as follows:
\begin{equation}
\label{Eq:Thm1FOC}
\begin{aligned}
\frac{\partial \mathcal{L}}{\partial \tilde{\delta}_n^j}&= \pi(\tilde{\delta}_n) \cdot \gamma_{1, j}\cdot(A_1-A_0)-(1-\pi(\tilde{\delta}_n))\cdot \gamma_{0, j}\cdot(B_0-B_1)\\
&\;\;\;\;\;\;  -\lambda \cdot \Big( \pi(\tilde{\delta}_n) \cdot  \gamma_{1, j}+(1-\pi(\tilde{\delta}_n)) \cdot \gamma_{0, j} \Big)-\mu^j+\nu^j\\
&\;\;\;\;\;\; +r\cdot h^j\cdot 
f(r\cdot\Delta_{\tilde{\delta}_n}) \cdot \sum_k \Big(\gamma_{1, k} \cdot \big(A_1\cdot\tilde{\delta}_n^k+A_0\cdot(1-
\tilde{\delta}_n^k)\big)-\gamma_{0, k} \cdot \big(B_0\cdot\tilde{\delta}_n^k+B_1\cdot(1-\tilde{\delta}_n^k)\big)\Big)\\
&\;\;\;\;\;\; -\lambda\cdot r\cdot h^j\cdot f(r\cdot\Delta_{\tilde{\delta}_n})  \cdot \sum_k \Big( \gamma_{1, k}\cdot \tilde{\delta}_n^k- \gamma_{0, k}\cdot \tilde{\delta}_n^k\Big). 
\end{aligned}
\end{equation}
Defining
\begin{eqnarray*}
\xi & \equiv & r\cdot f(r\cdot\Delta_{\tilde{\delta}_n}) \Big(\big(A_1 \cdot TPR(\tilde{\delta}_n) + A_0 \cdot (1-FNR(\tilde{\delta}_n))\big)- \big(B_0 \cdot FPR(\tilde{\delta}_n) - B_1 \cdot (1-TNR(\tilde{\delta}_n))\big) \\
 & & -\lambda \cdot (TPR(\tilde{\delta}_n) - FPR(\tilde{\delta}_n))\Big),
\end{eqnarray*}
and, substituting $h^j=\gamma_{1, j}-\gamma_{0, j}$, Equation \ref{Eq:Thm1FOC} reduces to:
\begin{equation}
\label{Eq:PartialLagrangian}
{{\partial \mathcal{L}}\over{\partial \tilde{\delta}_n^j}}= \underbrace{\gamma_{1, j} \cdot \overbrace{\Big(\pi(\tilde{\delta}_n) (A_1-A_0-\lambda)+\xi \Big)}^{X} \; - \; \gamma_{0, j} \cdot \overbrace{\Big((1-\pi(\tilde{\delta}_n))(B_1-B_0+\lambda)  + \xi\Big)}^{Y}}_{W^j}-\mu^j+\nu^j,
\end{equation}
where, for simplicity (and as labeled in Equation \ref{Eq:PartialLagrangian}), we define
\begin{eqnarray*}
X & \equiv & \pi(\tilde{\delta}_n) \cdot (A_1-A_0-\lambda)+\xi,\\
Y & \equiv & (1-\pi(\tilde{\delta}_n)) \cdot (B_1-B_0+\lambda)  + \xi, \;\;\;\; \text{ and}\\
\text{ for each }  j \in \{1,\ldots,n\}, \;\;\;\; W^j & \equiv & \gamma_{1,j} \cdot X + \gamma_{0,j} \cdot Y.
\end{eqnarray*}
Note that, evaluated at any fixed vector $\tilde{\delta}_n$, the terms $X$ and $Y$ are constant across all bins $j$. \\~\\
The KKT necessary conditions for optimization imply each of the following about $\tilde{\delta}_n^*$:
\begin{enumerate}
\item $W^j-\mu^j+\nu^j=0$ for all $j$, 
\item $\lambda \cdot (v(TPR(\tilde{\delta}_n^*), FPR(\tilde{\delta}_n^*))\leq q$,
\item $\mu^j \cdot (1-\tilde{\delta}_n^{j*})=0$ for all $j$,
\item $\nu^j\cdot \tilde{\delta}_n^{j*}=0$ for all $j$, 
\item $\lambda\geq 0$ and $\mu^j\geq 0$ and $\nu^j \geq 0$ for all $j$. 
\end{enumerate}

\noindent Clearly $\tilde{\delta}_n^{j*} \in [0,1]$ for all $j$, and the KKT conditions yield the following conclusions: 
\begin{itemize}
\item If $\tilde{\delta}_n^{j*}=0$ then $\mu^j=0$, implying that $\nu^j=W^j\leq 0$.
\item If $\tilde{\delta}_n^{j*}=1$ then $\nu^j=0$, implying that $\mu^j=-W^j\leq 0$ (\textit{i.e.}, $W^j\geq 0$).
\item If $\tilde{\delta}_n^{j*}\in(0, 1)$ then $\mu^j=\nu^j=0$, implying that  $W^j=0$.
\end{itemize}

\noindent By the strict MLRP we know that ${{\gamma_{1, j}}\over{\gamma_{0, j}}}$ is strictly increasing in $j$. Because $X$ and $Y$ (as denoted in Equation \ref{Eq:PartialLagrangian}, above) are constant across bins at any fixed $\tilde{\delta}_n$, it follows that at an optimal $\tilde{\delta}_n^*$ the sign of $W_j$ can switch at most once in $j$. It follows that the optimal classifier is a threshold or negative threshold rule assigning $\tilde{\delta}_n^j=1$ when $W^j>0$, $\tilde{\delta}_n^j=0$ when $W^j<0$, and possibly randomizing on at most a single bin (the unique $j$ with $W^j=0$).\\~\\
\noindent \textit{Step 3: Enriching the Signal Space to $\mathbf{R}$.} Last, we construct a sequence of $n$-bin problems as above for increasing $n \to \infty$ and show that the optimal accuracy attained in these problems converges to the continuum optimum, and the optimal rules converge to a pure threshold rule that also meets the quota.\\~\\
Choose any sequence of nested bounded intervals, 
\[
\mathcal{I}_n\equiv (L_n,U_n] = \{ I_1, I_2, \ldots\}= \{ (L_1,U_1], (L_2, U_2], \ldots\},
\]
satisfying
\[
I_i \subset I_{i+1}, \;\;\;\; \forall i \in \{1,2,\ldots\}.
\]
For each $j \in \mathbb{N}$, define
\[
T_j \equiv \int_{\mathbf{R}\setminus \mathcal{I}_j}\big(g_0(s)+g_1(s)\big) \; \dee s,
\]
to represent the sum of the masses of $G_0$ and $G_1$ \textit{not} contained in $I_j$.  Each interval $I_j$ is partitioned into $b_j \in \mathbb{N}$ subintervals (``bins'') of equal Lebesgue measure (``width''), which we denote by $\epsilon_j > 0$.\\~\\
Now, without loss of generality, set $I_1 = (-1/2,1/2]$, $b_1 =1$, and  $\epsilon_1 = 1$.  Then let
\[
b \equiv \{b_1, b_2, \ldots\} = \{1, 3, 7, 15, \ldots\},
\]
or, more generally:
\[
b_{k+1} = 2 \cdot b_k + 1, \;\;\;\; \text{ for all } k \in \{2,3,\ldots\},
\]
denote the number of non-tail bins in $I_k$.  For each $k \in \mathbb{N}$, denote the set of bins for $I_k$ by
\[
\zeta_k \equiv \{\zeta_k^1, \ldots, \zeta_k^{b_k} \},
\]
satisfying the following:
\[
i<j \Rightarrow \sup\left[\,\zeta_k^i\,\right]\leq \inf \left[\,\zeta_k^j\,\right].
\]
Then, for each $k \in \{2,3,\ldots\}$, the width of the intervals in $I_k$, $\epsilon_k$, is defined as follows:
\[
\epsilon_{k+1} = \frac{\epsilon_k}{2},
\]
and define the interval $I_k \in \mathcal{I}_n$ as follows:
\[
I_{k+1} = (2L_k-U_k, 2U_k - L_k].
\]
For each $k \in \mathbb{N}$, denote the total signal mass outside interval $I_k$ by:
\[
T_k\equiv \int_{\mathbf{R}\setminus \mathcal{I}_k}\big(g_0(s)+g_1(s)\big) \; \dee s.
\]
Note the following:
\begin{enumerate}
    \item $s,s' \in \zeta_k^j \Rightarrow s,s' \in b^{j'}$ for some $j' \in \{1,\ldots, \zeta_{k'}\}$ for any $k' \in \{1,\ldots, k-1\}$,
    \item $\lim_{k \to \infty} \left[ I_k \right] = \mathbf{R}$,
    \item $\lim_{k \to \infty} \left[T_k\right]=0$, and
    \item $\lim_{k \to \infty} \epsilon_k = 0$.
\end{enumerate}
Now, for each $k \in \mathbb{R}$, let $\tilde{\delta}^{*}_{(k)}$ be the optimal classifier in the coarsened problem ($I_k$) constructed above, so $0 \leq \tilde{\delta}^{*}_{(k)} \leq 1$ everywhere and $v(\delta^{*}_{(k)}) = q$.  From the arguments in Step 2, above, we know that $\tilde{\delta}^{*}_{(k)}$ is a (positive or negative) threshold rule on the bin partition, using an interior (``mixing'') value for at most one bin.

We can view each $\tilde{\delta}^{*}_{(k)}$ as an element of $\mathcal{D}\subset L^\infty(\mathbb{R})$ (each $\tilde{\delta}^*_{(k)}$ is piecewise-constant on the bins and zero outside $I_k$). Because $\mathcal{D}$ is compact (Lemma \ref{Slemma}), the sequence $\{\tilde{\delta}^{*}_{(k)}\}_{k \in \mathbb{N}}$ contains a convergent subsequence, which we denote as $\{\tilde{\delta}^{*}_{(k_j)}\}_{j \in \mathbb{N}}$. Without loss of generality, we replace the original sequence with this convergent subsequence and re-index it, so we simply refer to it as $\{\tilde{\delta}^{*}_{(k)}\}_{k \in \mathbb{N}}$ for the rest of the proof. With this, we know that the limit of this sequence, 
\[
\tilde{\delta}^*_{\infty}  = \lim_{k \to \infty} \left[\tilde{\delta}^{*}_{(k)}\right] \in L^\infty,
\]
satisfies the following:
\[
\tilde{\delta}^{*}_{(k)} \rightarrow \tilde{\delta}^*_{\infty}\,\,\,\text{in the weak-}\ast\text{topology of} \,L^\infty.
\]
Note that each of the following functions:
\[
FPR(\delta)=\int_{\mathbf{R}} g_0(s) \cdot \delta(s) \; \dee s, \qquad 
TPR(\delta)=\int g_1(s) \cdot \delta(s) \; \dee s, \qquad
\Delta(\delta)=\int (g_1-g_0) \cdot \delta(s) \; \dee s,
\]
are weak-$\ast$ continuous because $g_0,g_1 \in L^1$.  Since $F(\cdot)$ is a continuous CDF, the quota mapping
\[
v(\delta) = (1-F(r \cdot \Delta(\delta)) \cdot FPR(\delta)
           +    F(r \cdot \Delta(\delta)) \cdot TPR(\delta)
\]
is also weak-$\ast$ continuous, as is the designer payoff $u_D(\delta)$, which only depends on those same three integrals.  Therefore:
\[
v(\tilde{\delta}^{*}_{(k)}) = q\,\,\Rightarrow\,\,
v(\tilde{\delta}^*_\infty)=q, \,\,\,\text{ and }\,\,\,
u_D\left(\tilde{\delta}^{*}_{(k)} \right) \rightarrow  u_D\left(\tilde{\delta}^*_\infty \right).
\]
It follows that $\tilde{\delta}^*_\infty$ is a feasible continuum rule that attains the limit of the optimal binned values.\\~\\
Now, let $V_n = u_D(\tilde{\delta}^{*}_{(n)})$ and 
\[
V^* = \sup \left\{u_D(\delta): v(\delta)=q,\; 0\le\delta\le 1 \right\}.
\]
Then, for any $k,k'\in \mathbb{N}$, with $k<k'$, we have
\[
I_k\subset I_{k'} \;\; \text{ and } \;\; \zeta_{k'}^j \in I_{k'}, \zeta_{k'}^j \in \zeta_k \cap \zeta_{k'}.
\]
In words, the intersection of the collection of non-tail bins for interval $I_{k'}$ are a refinement of the non-tail-bins in $I_k$ for any $k<k'$.  Accordingly, every rule that is admissible for interval $I_k$ is admissible for any $I_k'$ with $k<k'$, implying that $V_{n} \ge V_{n-1}$. 

Because every binned rule is a feasible continuum rule,  $V_n \le V^*$, so $V_n \uparrow V \le V^*$.  The previous paragraph shows that the limit value, $V$, is attained by $\tilde{\delta}^*_\infty$.  Accordingly, $V = V^*$ and $\tilde{\delta}^*_\infty$ is an optimal solution of the (limiting) continuum problem.

Finally, the arguments in Step~2 above imply that each $\tilde{\delta}^{*}_{(k)}$ is a threshold or negative--threshold rule with at most one mixing bin.   Because the bins in $I_{k'}\cap I_k$ are a partition of the non-tail support of $I_k$, the orientations $s_n\in\{\pm1\}$\footnote{These describe whether the threshold is positive ($s_n=1$) or negative ($s_n=-1$).} and thresholds, $\tau_n$, converge as $n \to \infty$.  Thus, the weak-$\ast$ limit, $\tilde{\delta}^*_\infty$, is a threshold rule on $\mathbf{R}$ that satisfies the quota constraint and attains the designer's optimum, as was to be shown.  
\end{proof}

\ \\  \textbf{ Theorem \ref{noNeg}} \textit{Let $[A_1\geq B_0]$, and $[A_0\geq B_1]$ and $[A_1\geq A_0$ or $B_1\geq B_0]$. Then an interior negative threshold rule is never optimal for the designer.} 
\begin{proof}
We will show that when designer preferences satisfy the above conditions, any interior negative threshold is strictly dominated by a rule assigning the positive classification to everyone or to no one. Begin by considering any interior negative threshold rule $\tau^- \in\mathbf{R}$. To simplify our expressions we define the following: 
$$
\begin{array}{lcl}
F^-&\equiv& F\big(r\cdot (G_1(\tau^-)-G_0(\tau^-))\big),\\
F_0&\equiv&F(0),\\
G^-_\beta&\equiv&G_\beta(\tau^-).
\end{array}
$$
In words, $F^-$ is the prevalence induced by negative threshold rule $\tau^-$, $F_0$ is sincere prevalence, and $G_\beta^-$ is the probability that an individual receives the positive classification outcome if choosing behavior $\beta$, given rule $\tau^-$. We know that $F^-<F_0$ because a negative threshold rule strictly reduces compliance relative to sincere compliance, and $G_0^->G_1^-$ by the MLRP.  We let $u^-$, $u_0$, and $u_1$ denote the designer's payoff at $\tau^-$ and at the classification rules assigning the negative outcome or positive outcome to everyone, respectively:

$$
\begin{array}{lcl}
u^-&\equiv&F^-\cdot \big(G_1^-\cdot A_1+(1-G_1^-)\cdot A_0 \big)+(1-F^-)\cdot \big(G_0^-\cdot B_0+(1-G_0^-)\cdot B_1 \big),\\
u_0&\equiv&F_0 \cdot A_0+(1-F_0)\cdot B_1,\\
u_1&\equiv&F_0 \cdot A_1+(1-F_0)\cdot B_0.
\end{array}
$$

We want to show that $u^-<\max\{u_1, u_0\}$. First, note that if $[A_1=B_0$ and $A_0=B_1]$ then the unique optimal rule assigns the positive classification to everyone if $A_1>A_0$ and to no one if $B_1>B_0$.\footnote{We have assumed that it can't be the case that $A_1=A_0=B_1=B_0$.} Consequently, we can restrict attention to cases in which one of these inequalities is strict: either $A_1>B_0$ or $A_0>B_1$ or both.

Next, define $\tilde{u}$ as the payoff the designer would receive if prevalence were fixed at  $F_0$ but $G_0^-$ and $G_1^-$ were unchanged (this payoff could likely not be obtained via any classification rule, but it provides us with a useful upper bound  on $u^-$):
$$\tilde{u}\equiv F_0\cdot \big(G_1^-\cdot A_1+(1-G_1^-)\cdot A_0 \big)+(1-F_0)\cdot \big(G_0^-\cdot B_0+(1-G_0^-)\cdot B_1 \big).$$ Our assumption that  $A_1\geq B_0$ and $A_0\geq B_1$ with one inequality strict  means that the designer strictly prefers greater compliance to less, conditional on holding $G_\beta^-$ constant. Consequently, $\tilde{u}> u^-$. This reduces our problem to showing that: $$\tilde{u}\leq \max\{u_0, u_1\}. $$ Suppose by way of contradiction that $\tilde{u}> \max\{u_0, u_1\}. $ This implies that the following two expressions are strictly positive:
\begin{equation}
\label{Eq:TildeUReexpression}
\begin{array}{lclcl}
\tilde{u}-u_1 &=&F_0 \cdot (1-G_1^-) \cdot (A_1-A_0)+(1-F_0) \cdot (1-G_0^-) \cdot (B_1-B_0)&>&0,\\
\tilde{u}-u_0 &=&F_0 \cdot G_1^- \cdot (A_1-A_0)+(1-F_0) \cdot G_0^- \cdot (B_0-B_1)&>&0.\\
\end{array}
\end{equation}
We will show that requiring the two above expressions to be positive forces $A_1<A_0$ and $B_1<B_0$, a contradiction. We'll begin by showing that $A_1<A_0$.

Taking a convex combination of the two inequalities in \eqref{Eq:TildeUReexpression} and rearranging yields
\begin{equation}
\label{firstEqPos} 
G_0^- \cdot (\tilde{u}-u_1)+(1-G_0^-)\cdot (\tilde{u}-u_0) = (A_0-A_1)\cdot (G_0^--G_1^-)\cdot F_0. 
\end{equation} 
We know that the left side of Equation \ref{firstEqPos} is strictly positive because it is a convex combination of two strictly positive terms. This convex combination reduces to the right side of Equation \ref{firstEqPos}, which requires $A_0>A_1$, as $F_0>0$ and $G_0^--G_1^->0$.

Now we show that $B_0>B_1$. Again taking a convex combination of the two inequalities in \eqref{Eq:TildeUReexpression} and rearranging yields:
\begin{equation}G_1^-(\tilde{u}-u_1)+(1-G_1^-)(\tilde{u}-u_0)=(B_0-B_1)(G_0^--G_1^-)(1-F_0). \label{secondEqPos} \end{equation}  Again, the left side of Equation \ref{secondEqPos} is strictly positive because it is a convex combination of two strictly positive terms, and it reduces to the right side of Equation \ref{secondEqPos}. This requires $B_0>B_1$, as $1-F_0>0$ and $G_0^--G_1^->0$.

It follows that if $\tilde{u}>\max\{u_0, u_1\}$ then it must be the case that $A_0>A_1$ and $B_0>B_1$, contradicting our assumption that either $A_1\geq A_0$ or $B_1\geq B_0$ (or both). It follows that $\tilde{u}\leq \max\{u_0, u_1\}$. Because $u^-< \tilde{u}$ we can conclude that either $u_1>  u^-$ or $u_0> u^-$, as was to be shown.
\end{proof}

\ \\  \textbf{ Proposition \ref{tightProp}} \textit{If any of the inequalities in Theorem \ref{noNeg} fails, then there exists an environment in which an interior negative threshold is optimal.}
\begin{proof}
We label our three inequalities as follows: 

\begin{center}(1) $A_1\geq B_0$, \,\,\,\,\,(2) $A_0\geq B_1$, \,\,\,\,\,and \,\,\,\,\,(3) [$A_1\geq A_0$ or $B_1\geq B_0$].\end{center}

For each inequality we numerically construct an environment (a choice of $A_1, A_0, B_1, B_0$, signal distributions $G_1$ and $G_0$, reward $r$, and cost distribution $F$) in which only that inequality is violated, and in which an interior negative threshold rule is strictly optimal for the designer. Throughout we will let:

$$\begin{array}{rcl}
F&=&\text{Normal}(0, 1),\\
G_0&=&\text{Normal}(0, 1),\\
G_1&=&\text{Normal}(1, 1).\\
\end{array}
$$

\ \\ \textit{Case 1: (1) and (2) are satisfied.} Let $(A_1, A_0, B_1, B_0)=\{\frac{1}{2}, 1, 0, \frac{1}{2}\}$ and let $r=\frac{1}{2}$. The optimal rule sets $\tau^{-*}=0.9$. This rule generates equilibrium compliance of $\pi^*=0.43$ and yields a designer payoff of $u_D(\tau^{-*})=0.56$. By comparison, the optimal positive threshold sets $\tau^* \approx \pm\infty$ (``punish all" or ``reward all") and yields the designer a payoff of $u_D(\tau^*)=0.5$. 

\ \\ \textit{Case 2: (2) and (3) are satisfied.} Let $(A_1, A_0, B_1, B_0)=\{\frac{1}{2}, 0, 0, 1\}$ and let $r=4$. The optimal rule sets $\tau^{-*}=1.52$. This rule generates equilibrium compliance of $\pi^*=.17$ and yields a designer payoff of $u_D(\tau^{-*})=0.84$. By comparison, the optimal positive threshold sets $\tau^*\approx -\infty$ (``reward all") and yields the designer a payoff of $u_D(\tau^*)=0.75$.

\ \\ \textit{Case 3: (1) and (3) are satisfied.} Let $(A_1, A_0, B_1, B_0)=\{0,\frac{1}{2}, 1, 0\}$ and let $r=6$. The optimal rule sets $\tau^{-*}=-1.3$. This rule generates equilibrium compliance of $\pi^*=0.3$ and yields a designer payoff of $u_D(\tau^{-*})=0.78$.  By comparison, the optimal positive threshold sets $\tau^*\approx \infty$ (``punish all") and yields the designer a payoff of $u_D(\tau^*)=0.75$. 
\end{proof}

\ \\  \textbf{ Theorem \ref{twoCuts}} \textit{Suppose that $G_1$, $G_\chi$, and $G_0$ share a common exponential form, as defined in Assumption \ref{twoCutAss}. The optimal classifier satisfying quota $q \in (0,1]$ is either a threshold rule or a two-cut rule.}
\begin{proof}
As in our proof of Theorem \ref{quotaTheorem}, we partition the signal line into ordered bins, $\mathcal{B}_n \equiv \{B^1, \ldots, B^n\}$, that are increasing in signal. For bin $B^j$ and behavior $\beta\in\{0, 1, \chi\}$, denote an arbitrary strategy in the resulting coarsened problem by $\tilde{\delta}_n = (\tilde{\delta}_n^1, \tilde{\delta}_n^2, \ldots, \tilde{\delta}_n^n)$, and define:
\[
\gamma_{\beta, j}\equiv \int_{B^j}g_{\beta}(s) \; \dee s \hspace{.25in}\text{and} \hspace{.25in} S_\beta \equiv \sum_j^n \, \gamma_{\beta, j} \cdot \tilde{\delta}_n^j.
\]
In words, $g_{\beta, j}$ is the mass of signal distribution $g_\beta$ on bin $B^j$, and $S_\beta$ is the probability a person choosing behavior $\beta$ receives a positive classification outcome, with $\tilde{\delta}^j$ representing the probability of a positive classification on bin $j$.\\~\\
To derive the optimal strategy for the coarsened problem, $\tilde{\delta}^*_n=\{\tilde{\delta}_n^{j*}\}_{j=1}^n$, we turn to the individual incentives induced by any coarsened strategy, $\tilde{\delta}_n$ as follows:
\begin{equation}
\label{Eq:BigDeltaCheatingProof}
\begin{array}{rcl}
\Delta^{1\chi}_{\tilde{\delta}_n} & \equiv & S_1(\tilde{\delta}_n) -S_\chi(\tilde{\delta}_n) ,\\[0.2em]
\Delta^{10}_{\tilde{\delta}_n} & \equiv & S_1(\tilde{\delta}_n) -S_0(\tilde{\delta}_n) , \\[0.2em]
\Delta^{\chi 0}_{\tilde{\delta}_n} & \equiv & S_\chi(\tilde{\delta}_n) -S_0(\tilde{\delta}_n) .
\end{array}
\end{equation}
The quantity $\Delta^{ab}_{\tilde{\delta}_n}$ equals the difference in the probability of $d_i=1$ when choosing behavior $\beta_i=a$ rather than $\beta_i=b$, given $\tilde{\delta}_n$. Accordingly $i$'s optimal behavior is: 
\begin{equation}
\label{behavior3Proof}
\beta_i^*(c_i \mid r,\tilde{\delta}_n,\kappa) \; = \; \begin{cases}
1 & \text{ if }\,\,\, c_i\leq \min\left[r\cdot \Delta^{10}_{\tilde{\delta}_n}, \;\; r\cdot \dfrac{\Delta^{1\chi}_{\tilde{\delta}_n}}{1-\kappa} \right],\\[1em]
\chi & \text{ if }\,\,\, c_i\in \left( r\cdot \dfrac{\Delta^{1\chi}_{\tilde{\delta}_n}}{1-\kappa} , \;\; r \cdot \dfrac{\Delta^{\chi 0}_{\tilde{\delta}_n}}{\kappa} \right),\,\,\, \\[1em]
0 & \text{ if }\,\,\, c_i\geq \max \left[r\cdot \Delta^{10}_{\tilde{\delta}_n}, \;\; r\cdot \dfrac{\Delta^{\chi 0}_{\tilde{\delta}_n}}{\kappa} \right], 
\end{cases}
\end{equation}
so that the prevalence of each behavior $\beta \in \{0,1, \chi\}$, is:
\[
\begin{array}{rcl}
\pi_1\left(\tilde{\delta}_n\right) & = & F\left(r \cdot \min\left[\Delta^{10}_{\tilde{\delta}_n}, \;\; \dfrac{\Delta^{1\chi}_{\tilde{\delta}_n}}{1-\kappa} \right] \right),\\[2em]
\pi_0\left(\tilde{\delta}_n\right) & = & 1-F\left(r \cdot 
\max \left[\Delta^{10}_{\tilde{\delta}_n}, \;\; \dfrac{\Delta^{\chi 0}_{\tilde{\delta}_n}}{\kappa} \right]\right), \,\,\,\text{ and }\\[2em]
\pi_\chi\left(\tilde{\delta}_n\right) & = & \begin{cases}
\; F\left( r\cdot \dfrac{\Delta^{1\chi}_{\tilde{\delta}_n}}{1-\kappa} \;\; r \cdot \dfrac{\Delta^{\chi 0}_{\tilde{\delta}_n}}{\kappa} \right), & \text{ if } \;\;\;
\dfrac{\Delta^{1\chi}_{\tilde{\delta}_n}}{1-\kappa} > \dfrac{\Delta^{\chi 0}_{\tilde{\delta}_n}}{\kappa},\\
\; 0 & \text{ otherwise.}
\end{cases}
\end{array}
\]

We can write our Lagrangian as follows:

\begin{align}
\mathcal{L}(\tilde{\delta}_n)=&\sum_j \Big(\pi_1\left(\tilde{\delta}_n\right) \cdot \gamma_{1,j} \cdot (A_1\cdot \tilde{\delta}_n^j+A_0\cdot(1-\tilde{\delta}_n^j))\Big) \label{cheatLagrangian}\\
&\hspace{1.25in} + \sum_j \Big( \pi_\chi\left(\tilde{\delta}_n\right) \cdot \gamma_{\chi,j}\cdot (B_0\cdot \tilde{\delta}_n^j + B_1 \cdot (1- \tilde{\delta}_n^j))\Big) \notag \\ 
 & \underbrace{\hspace{2.7in}+\sum_j\Big( \pi_0\left(\tilde{\delta}_n\right) \cdot \gamma_{0,j} \cdot (B_0\cdot\tilde{\delta}_n^j+B_1\cdot (1-\tilde{\delta}_n^j))\Big)\hspace{0in}}_{\text{Designer's objective}} \notag \\
&\hspace{.24in}-\underbrace{\lambda \cdot \Big(\sum_j \Big(\pi_1\left(\tilde{\delta}_n\right) \cdot \gamma_{1,j} \cdot \tilde{\delta}_n^j+ \pi_\chi\left(\tilde{\delta}_n\right) \cdot \gamma_{\chi,j}\cdot \tilde{\delta}_n^j+ \pi_0\left(\tilde{\delta}_n\right) \cdot \gamma_{0,j} \cdot \tilde{\delta}_n^j\Big) -q\Big)}_{\text{Quota constraint}} \notag \\
&\hspace{1.76in}+\underbrace{\sum_j \mu^j \cdot (1-\tilde{\delta}_n^j)+\sum_j \nu^j \cdot \tilde{\delta}_n^j.}_{\text{Constraints requiring } 0 \; \leq \; \tilde{\delta}_n^j \; \leq \; 1} \notag
\end{align}
Suppose that $\tilde{\delta}_n^*$ satisfies the KKT necessary conditions. If $\tilde{\delta}_n^*$ induces no cheating in equilibrium (\textit{i.e.}, $\Pr[\beta_i=\chi\mid \tilde{\delta}_n^*]=0$), then the problem is equivalent to that considered in the proof of Theorem 1, implying that $\tilde{\delta}_n^*$ must be a threshold rule. Accordingly, we now suppose that $\tilde{\delta}_n^*$ induces a positive equilibrium probability of cheating:
\[
\dfrac{\Delta^{1\chi}_{\tilde{\delta}_n}}{1-\kappa} > \dfrac{\Delta^{\chi 0}_{\tilde{\delta}_n}}{\kappa}.
\]
To conserve space, define the following terms:\footnote{Because $F$ is continuous but not necessarily smooth, we use directional derivatives of $F$ to calculate measurable selections $a_{\tilde{\delta}_n}\in\partial F(u_1)$ and $b_{\tilde{\delta}_n}\in\partial F(u_0)$. Directional derivatives exist everywhere and our KKT conditions hold with $a_{\tilde{\delta}_n}, b_{\tilde{\delta}_n}$ interpreted as any such selections.}
\begin{eqnarray*}
u^1_{\tilde{\delta}_n} & \equiv & \frac{r \cdot \Delta^{1\chi}_{\tilde{\delta}_n}}{1-\kappa},\\[0.3em]
u^0_{\tilde{\delta}_n} & \equiv & \frac{r \cdot \Delta^{\chi 0}_{\tilde{\delta}_n}}{\kappa},\\[0.3em]
a_{\tilde{\delta}_n} & \equiv & \frac{r \cdot f\left(u^1_{\tilde{\delta}_n}\right)}{1-\kappa} \;\;\;\; (\geq 0), \;\;\;\; \text{ and}\\[0.3em]
b_{\tilde{\delta}_n} & \equiv & \frac{r \cdot f\left(u^0_{\tilde{\delta}_n}\right)}{\kappa} \;\;\;\; (\geq 0).
\end{eqnarray*}
\ With this in hand, differentiating our Lagrangian yields:

\begin{equation*}
{{\partial \mathcal{L}}\over{\partial \tilde{\delta}_n^j}}= \underbrace{X\cdot \gamma_{1,j}+Y_\chi\cdot \gamma_{\chi,j}+Y_0\cdot \gamma_{0,j}}_{W^j}-\mu^j+\nu^j,
\end{equation*} 
where $X, Y_\chi$, and $Y_0$ are (bin-invariant) constants defined as follows:
\begin{eqnarray*}
X & \equiv & (A_1-A_0-\lambda)F(u^1_{\tilde{\delta}_n})+a_{\tilde{\delta}_n}\Big(S_1 \cdot (A_1-\lambda)+(1-S_1) \cdot A_0-S_\chi  \cdot (B_0-\lambda)+(1-S_\chi) \cdot B_1\Big),\\
Y_\chi & \equiv &(B_0-B_1-\lambda) \cdot (F(u^0_{\tilde{\delta}_n})-F(u^1_{\tilde{\delta}_n})) \\
&&-a_{\tilde{\delta}_n} \cdot \Big(S_1 \cdot (A_1-\lambda)+(1-S_1) \cdot A_0-S_\chi \cdot (B_0-\lambda)+(1-S_\chi) \cdot B_1\Big)\\
&&-b_{\tilde{\delta}_n} \cdot \Big(S_0 \cdot (B_0-\lambda)+(1-S_0) \cdot B_1-S_\chi \cdot (B_0-\lambda)+(1-S_\chi) \cdot B_1\Big), \;\;\;\;\;\;\;\;\;\; \\
Y_0 & \equiv & (B_0-B_1-\lambda) \cdot F(u^0_{\tilde{\delta}_n})+b_{\tilde{\delta}_n} \cdot \Big(S_0(B_0-\lambda)+(1-S_0) \cdot B_1-S_\chi  \cdot (B_0-\lambda)+(1-S_\chi) \cdot B_1\Big).
\end{eqnarray*}
As in our proof of Theorem 1, the KKT necessary conditions imply the following:
\begin{eqnarray}
\delta^j=0&\Rightarrow& W^j\leq 0, \nonumber \\
\delta^j=1&\Rightarrow& W^j\geq 0, \text{ and} \label{KKTcheat}\\
\delta^j\in(0, 1)&\Rightarrow & W^j= 0.\nonumber
\end{eqnarray}

Setting Equation \ref{KKTcheat} aside for the moment, let $\{\tilde{\delta}^*_{(n)}\}$ be our solutions for a sequence of binned problems. Theorem 1 implies that we can without loss of generality presume that this sequence is convergent with $\tilde{\delta}^*_\infty\in L^\infty$ and $0\leq \tilde{\delta}^*_\infty \leq 1$ almost everywhere such that $\tilde{\delta}^*_{(n)}\rightarrow \tilde{\delta}^*_\infty$. Also convergent in this sequence are (1) the proportion of people receiving the positive classification outcome and (2) the designer's expected payoff. Similarly, the fact that our bin-invariant terms, $(X_n, Y_{\chi, n}, Y_{0, n})$, are bounded implies that \[
(X_n, Y_{\chi, n}, Y_{0, n})\rightarrow (X, Y_\chi, Y_0).
\]  
Returning to Equation \ref{KKTcheat} we have in the continuum that:
\begin{equation}\label{linearEq}
W(s) \equiv X \cdot g_1(s)+Y_\chi \cdot g_\chi(s)+Y_0 \cdot g_0(s).
\end{equation}
Our KKT  conditions imply $\tilde{\delta}^*_\infty=1$ when $W>0$,  $\tilde{\delta}^*_\infty=0$ when $W<0$, and  $\tilde{\delta}^*_\infty \in[0, 1]$ when $W=0$.   Accordingly, the cut points of the limiting rule, $\tilde{\delta}^*_\infty$, correspond to zeros of $W$. \\~\\
Now, suppose that $b_0=\min\{b_1, b_\chi, b_0\}$.\footnote{This supposition is a normalization without loss of generality: one could use any of the three behaviors.}  Dividing $W(s)$ by $g_0>0$ preserves the zeroes of $W(s)$ and, by our assumption that the $g_\beta$ share a common exponential form,  yields $\tilde{W}(s)$:
\begin{eqnarray*}
\tilde{W}(s) & \equiv & X \cdot \dfrac{g_1(s)}{g_0(s)}+Y_\chi \cdot \dfrac{g_\chi(s)}{g_0(s)} +Y_0,\\[0.2em]
& = & X \cdot e^{a_1-a_0+(b_1-b_0)T(s)} +Y_\chi \cdot e^{a_\chi-a_0+(b_\chi-b_0)T(s)} +Y_0.    
\end{eqnarray*}
At this point, we can replace the limiting rule, $\tilde{\delta}_\infty^*$ with  $\delta^*$, that is equivalent in the weak-$\ast$ topology to $\tilde{\delta}_\infty^*$, which necessarily yields an equivalent $\tilde{W}(s)$. Then, taking the first order condition of $\tilde{W}$ yields:
\begin{equation}
\frac{d \tilde{W}}{d s}=e^{-a_0-b_0 \cdot T(s)}\cdot T^\prime(s) \cdot \left(X\cdot (b_1-b_0) \cdot e^{a_1+b_1\cdot T(s)}+ Y_\chi\cdot (b_\chi-b_0)e^{a_\chi+b_\chi \cdot T(s)}\right). \label{Eq:TildeWZeroesFOC} 
\end{equation}
Because $T$ is strictly monotonic,  $e^{-a_0-b_0 T(s)}\cdot T^\prime(s)$ is always either strictly positive or strictly negative. It follows that the slope of $\tilde{W}$ can change signs only when:
\begin{equation}
\label{stepZero}
X\cdot (b_1-b_0) \cdot e^{a_1+b_1\cdot T(s)}+ Y_\chi\cdot (b_\chi-b_0) \cdot e^{a_\chi+b_\chi\cdot T(s)}=0.
\end{equation} 
Again, dividing Equation \ref{stepZero} by $e^{a_\chi+b_\chi\cdot T(s)}>0$ retains the same zeroes as the original function. Then, letting
\[
b_{1 0} \equiv b_1-b_0, \;\;\;\;\;\;
b_{1 \chi} \equiv b_1 - b_\chi, \;\; \text{ and } \;\;\; b_{\chi 0} \equiv b_\chi - b_0,
\]
differentiating this new function yields:
\begin{equation}
\label{splittie}
\frac{d}{ds} \left[X \cdot b_{1 0}  e^{a_1-a_\chi+ b_{1 \chi} \cdot T(s)}+ Y_\chi  \cdot b_{\chi 0} \right] = 
X  \cdot T^\prime(s)  \cdot  b_{1 0} \cdot b_{1 \chi} e^{a_1-a_\chi+ b_{1 \chi} \cdot T(s)}.
\end{equation}

Since the right hand side of Equation \ref{splittie} has a constant sign, the derivative of $\tilde{W}(s)$ can change sign at most once. Consequently, $\tilde{W}(s)$ has at most one ``turning point" and can cross the $x-$axis at most twice.  As $\tilde{W}(s)$ was constructed to inherit the zeroes of our original function, $W(s)$, it follows that $W(s)$ can also have at most two zeros.

Finally, whether the optimal classifier, $\delta^*$, is a threshold rule or a two-cut rule depends on the values of the constants, $X$, $Y_0$, and $Y_\chi$ when evaluated at $\delta^*$, but $\delta^*$ must be either of the two, as was to be shown.

\end{proof}

\subsection{Discussion of Assumption \ref{twoCutAss}. \label{assDis}} The proof of Theorem~\ref{twoCuts} relies on the observation that the optimal
classifier can change its action only at points where a certain linear
combination of densities,
\[
W(s) \;=\; Y_0\, g_0(s) + X\, g_1(s) + Y_\chi\, g_\chi(s),
\]
changes sign (Equation \ref{linearEq}).  The constants $(X, Y_0, Y_\chi)$ depend on the parameters of the model.  Therefore, the number of times $W(s)$ crosses zero provides an upper bound on the number of cutpoints an optimal rule can have.  To guarantee that optimal classifiers require at most two cutpoints, we show that $W(s)$ has at most two zeros.  Assumption~\ref{twoCutAss} is a convenient sufficient condition for this
``two-crossing" property, but the property is not unique to exponential-family densities. Some log-concave location families also exhibit the same structure, as do certain ordered mixture distributions. We do not attempt to characterize the most general class of distributions for which the result holds. Our assumption is simply a tractable and empirically relevant environment that ensures that any optimal classifier remains low-dimensional. Environments where $W(s)$ exhibits more than two crossings are possible and may give rise to highly non-monotone optimal classifiers with multiple, disjoint decision regions.  These settings are substantively interesting but beyond the scope of this paper, as our focus is on conditions under which optimal mechanisms remain simple.

\end{document}